\newtheorem{theorem}{Theorem}
\newtheorem{lemma}{Lemma}
\newtheorem{example}{Example}
\newtheorem{assumption}{Assumption}
\newtheorem{definition}{Definition}
\newtheorem{observation}{Observation}
\newtheorem{remark}{Remark}
\newcommand{\ind}{\perp\!\!\!\!\perp} 
\begin{document}

\title{Tell Me Why: Incentivizing Explanations}

\author{Siddarth Srinivasan}
\affil{Harvard University, Cambridge, MA, USA}
\author{Ezra Karger}
\affil{Federal Reserve Bank of Chicago, IL, USA}
\author{Michiel Bakker}
\affil{Massachusetts Institute of Technology, Cambridge, MA, USA}
\author{Yiling Chen}
\affil{Harvard University, Cambridge, MA, USA}

\date{}

\maketitle

\begin{abstract}
Common sense suggests that when individuals explain why they believe something, we can arrive at more accurate conclusions than when they simply state what they believe.  Yet, there is no known mechanism that provides incentives to elicit explanations for beliefs from agents. This likely stems from the fact that standard Bayesian models make assumptions (like conditional independence of signals) that preempt the need for explanations, in order to show efficient information aggregation.  A natural justification for the value of explanations is that agents' beliefs tend to be drawn from overlapping sources of information, so agents' belief reports do not reveal all that needs to be known. Indeed, this work argues that rationales—explanations of an agent’s private information—lead to more efficient aggregation by allowing agents to efficiently identify what information they share and what information is new. Building on this model of rationales, we present a novel `deliberation mechanism' to elicit rationales from agents in which truthful reporting of beliefs and rationales is a perfect Bayesian equilibrium. 

\end{abstract}

\section{Introduction}
From political journalism to major business forecasts, experts routinely offer more than just numerical estimates: they explain how they arrived at those estimates. Outside formal `expert' settings, daily deliberation—whether corporate board discussions or online communities—also involves participants sharing \emph{why} they believe something, not just \emph{what} they believe. Common sense suggests that understanding these underlying reasons can yield more accurate conclusions than possible by aggregating raw beliefs alone.

While mechanisms like proper scoring rules \citep{gneiting2007strictly} and prediction markets \citep{wolfers2004prediction} can provide incentives to elicit beliefs from experts, there are no known mechanisms that can provide incentives to elicit \emph{natural-language rationales} despite the intuitive benefits. This likely stems from the fact that capturing the value of explanations or `rationales' with standard Bayesian models is challenging. Indeed, standard models that demonstrate efficient information aggregation solely from eliciting beliefs assume that beliefs convey everything an agent has to convey. When models do permit agents' beliefs to not fully reveal private information or have informational overlaps, we typically lose efficient information aggregation. The puzzle remains: how can we formally model the commonsense idea that explaining one’s rationale improves aggregation?

One natural explanation for the intuition that rationales or explanations can produce more accurate collective beliefs is this: agents' beliefs are drawn from overlapping sources of information, and sharing rationales helps agents `discount' overlapping or shared pieces of information to prevent `double-counting.' The idea of agents' drawing beliefs from partially overlapping sources has been considered in social learning \citep{dasaratha2019aggregative}, forecast aggregation \citep{satopaa2017partial, babichenko2021learning}, and epistemic social choice \citep{ding2021deliberation, dietrich2024deliberation}. While these works study the dynamics of information aggregation under such models, we are interested in applying such a model towards a mechanism to elicit explanations for agents' beliefs.

To this end, we propose a conceptually simple, tractable model where rationales allow agents to identify the aspects of their beliefs that are derived from shared information and thus enable efficient aggregation. Specifically, we model agents' beliefs as composed of `atomic' pieces that may overlap across agents. Then, we consider a setting where agents arrive sequentially and observe all previous agents' reports. If previous agents reported only their belief, they only reveal a lossy aggregate of these overlapping pieces. However, if previous agents reported rationales, they also revealed the specific pieces composing their belief, allowing future agents to efficiently disregard already-known information. In this way, rationales permit more efficient Bayesian aggregation. We show that providing rationales increases the speed and limit of information aggregation relative to belief-only reports.

Building on this model, we develop a `deliberation mechanism' to incentivize agents to submit both their beliefs and the rationales that explain these beliefs. In most settings, composing a textual or structured rationale is costly: it requires time and effort and thus must be incentivized. In our mechanism, `experts' with private information sequentially report their belief and rationale (after observing all previous reports) to a `supervisor' who then reports their own belief at each time-step to the principal.  We use a log scoring rule to reward each expert according to how much their report improves the \emph{supervisor's} predictions, and to reward the supervisor for the accuracy of their report at each time-step. Supervisors' beliefs will be more accurate when agents report rationales, but agents have no natural incentive \emph{to} report rationales; after all, simply reporting beliefs induces the same supervisor belief update at no extra effort cost. To overcome this, the supervisor must \emph{commit} ex-ante to ignoring reports without rationales. We show that when the supervisor makes such a commitment, it is a perfect Bayesian equilibrium for agents to truthfully report their belief and rationale.

Thus, the primary contribution of this paper is an incentive-compatible mechanism for eliciting natural-language rationales from agents that explain their beliefs, where truthfully reporting beliefs and rationales is a perfect Bayesian equilibrium. Our mechanism is built on a model of rationales that may also be of independent interest, as it shows how rationales enable efficient information aggregation under the natural setting where agents' beliefs are drawn from overlapping sources of information. Our proposed mechanism has broad relevance to applications from judgmental forecasting \citep{tetlock2016superforecasting}, to forecasting with LLMs \citep{halawi2024approaching, schoenegger2024wisdom, karger2024forecastbench} to AI alignment \citep{hubinger2020ai, bowman2022measuring}.

\paragraph{Outline}  This paper is structured as follows: in Section 2, we review related work; in Section 3, we develop our model of rationales; in Section 4, we show how information aggregates more efficiently with rationales than with mere beliefs; in Section 5, we present our deliberation mechanism to elicit rationales from agents; in Section 6, we discuss other constructions and extensions of our proposed mechanism; and in Section 7, we conclude with directions for future work. Longer proofs are presented in the appendix.

\section{Related Work}

This work is relevant to several different literatures across economics and computer science. We provide an overview of the most relevant works here:

\paragraph{Forecast Aggregation with Overlapping Information} Our work compares the value of rationales against the relatively strong baseline of aggregating just agents' beliefs, a problem extensively studied in the forecast aggregation literature. The most common approach to aggregating raw forecasts is \emph{linear opinion pooling}, which involves taking a weighted average of forecasts \citep{armstrong2001combining}. \citet{ranjan2010combining} and \citet{satopaa2014combining} observe that linear opinion pooling is not calibrated, lacks resolution, and is underconfident when agents' information comes from diverse sources, and hence needs to be \emph{extremized} \citep{satopaa2015combining}.  \citet{Neyman2021AreYS} investigate the properties of averaging forecasts when expert signals are informational substitutes, and show how much forecasts should be extremized when the mechanism has access to a prior.
The need to extremize signals often stems from the fact that agents' forecasts are generated from overlapping sources of information. \citet{palley2019extracting} elicit both private beliefs and expectation of crowd beliefs and use a `pivoting' algorithm that infers what is shared and what is novel to do aggregation. \citet{satopaa2016modeling,satopaa2017partial} and \citet{babichenko2021learning} also study optimal forecast aggregation in settings with partial informational overlap. While we construct similar models, these works focus on the best possible aggregation given agents' signals; we use informational overlaps to motivate the need for rationales to uncover such overlaps.

\paragraph{Epistemic Social Choice and Deliberation} The epistemic social choice literature studies \emph{voting} mechanisms that can successfully aggregate information about the true state of the world.  
Our model of rationales is conceptually similar to deliberation models in epistemic social choice \citep{dietrich2024deliberation, ding2021deliberation}\@: we conceive of rationales as a deliberation mechanism that reveals agents' informational overlaps. As in our work, \citet{dietrich2024deliberation} suggest overlapping Gaussian evidence as a possible instantiation of their model and consider the relative value of increasing group size versus deliberation. Our setting primarily differs in the following ways: (1) we evaluate rationales against a baseline where Bayesian agents communicate their beliefs/signals instead of binary votes. This reveals private information more finely and mitigates phenomena like herding \citep{banerjee1992simple, bikhchandani1998learning}, and so \emph{forecast} aggregation is a more efficient baseline; and (2) we consider a mechanism design problem where agents' do not intrinsically have utilities/preferences over the world states, but instead must be incentivized to share information. There is also significant empirical work on the benefits and pitfalls of deliberation for group decision-making \citep{navajas2018aggregated, graeber2024explanations, lorenz2011social, moshman1998collaborative}. The Delphi method \citep{dalkey1963experimental, helmer1967analysis} is another deliberation method designed to elicit a consensus opinion from a group by alternating between eliciting opinions with justifications and sharing these justifications with the group. Although the method is not theoretically motivated, the structure is motivated by the idea that experts exchanging information can produce more informed opinions. 

\paragraph{Social Learning}  Information aggregation is also studied in the \emph{social learning} \citep{golub2017learning} literature. This literature typically explores models where agents share signals/beliefs and may update their own belief through Bayes' rule \citep{acemoglu2011bayesian} or heuristic approaches like the DeGroot method \citep{degroot1974reaching}. \citet{hkazla2021bayesian} show that Bayesian reasoning over opinions shared in a network is generally difficult. \citet{dasaratha2019aggregative} provide a tractable Gaussian model of social learning that they use to analyze how efficiently a network aggregates information in the presence of informational confounds; we adopt similar techniques to develop a tractable model applicable to our setting.

\paragraph{Aumann's Agreement Theorem} \citet{aumann1976agreeing} and \citet{geanakoplos1982we} present models where Bayesian agents aggregate information merely by sharing beliefs. \citet{kong2022false} give positive results showing that merely exchanging beliefs allows for full information aggregation if signals are independent conditional on ground truth. However, this is a significant assumption that doesn't hold in general since it neglects agents' informational overlap, which limits efficient aggregation from simply exchanging beliefs. Our proposed model shows that aggregation is more efficient when agents reveal explanations, relative to when they simply exchange beliefs as in the Aumannian protocol.

\paragraph{Market-based Information Aggregation} The economics literature has studied how market mechanisms can efficiently aggregate market participants' information via prices \citep{fama1970efficient}. Prediction markets \citep{wolfers2004prediction, chen2010gaming} are a specific market mechanism where agents make predictions about (typically) binary outcomes and are paid out based on the outcome. \citet{ostrovsky2009information} and \citet{kong2022false} show conditions under which prediction markets can aggregate participants' information; in our work, we consider information structures under which prediction markets would not efficiently aggregate information.  \citet{hanson2003combinatorial} proposed \emph{market scoring rules} for prediction markets, that allow us to cast prediction markets as a sequential mechanism where agents are paid using proper scoring rules for the \emph{information added}. We use such ideas to design the payoffs for agents producing rationales.

\paragraph{Elicitation Beyond Forecasts} \citet{frongillo2015elicitation} motivate the need for eliciting \emph{more} than agents' signals/forecasts with a parametric model where the principal seeks to elicit a notion of \emph{confidence} from agents in order to efficiently aggregate information. This is different from our motivation, which is to discount shared information.  \citet{srinivasan2021auctions} propose a mechanism to elicit truthful \emph{peer reviews} in a peer prediction context, where agents report scores and predict other agents' scores on various criteria; in our work, we seek to directly incentivize revealing a rationale.

\paragraph{AI Alignment}
While our model is designed with Bayesian agents in mind, our mechanism to elicit truthful rationales is relevant to ideas in AI alignment, particularly debate methods \citep{khan2024debating, michael2023debate, irving2018ai}. The idea is that incentives for explanations, adversarial back-and-forths, and collaborative discussion could potentially be used as rewards/loss signals in an AI alignment context. Our mechanism is perhaps most similar in spirit to the proposal `AI safety via market making' \citep{hubinger2020ai} where an agent is tasked with maximally changing a principal's belief, although our mechanism is not adversarial. In Section \ref{sec:alt}, we also discuss a `self-resolving' extension of our mechanism applicable in contexts without ground truth that could be applicable to the problem of scalable oversight \citep{bowman2022measuring}.

\section{Model} \label{sec:model}

We consider a \emph{principal} $P$, a \emph{supervisor agent} $S$, and \emph{expert agents} $1, 2, \ldots, T, \ldots$ (generally indexed with $t, t'$).\footnote{We separate the role of the supervisor and principal for clarity of presentation, but these roles may be combined.} The experts have private information that the principal seeks to elicit and aggregate to arrive at an informed belief about binary outcome $Y$. To this end, the principal tasks the supervisor with collecting and aggregating the private information held by the experts and reporting this to the principal. For convenience, we assume the supervisor does not have any private information, though this can easily be relaxed. Experts will arrive sequentially one at a time, and can view the full history of previous reports made to the supervisor. Experts (strategically) report their own beliefs, and importantly, may also (strategically) report a rationale explaining their belief to the supervisor. 

Now, we present our model of experts' information, with the goal of explaining how rationales, when submitted, affect what experts know. Specifically, our model is designed to capture the intuitive idea that an expert's rationale doesn't affect their own belief about outcome $Y$ (which already summarizes what they know); instead, a rationale's utility lies in how it helps \emph{other} experts reason about what information is shared and thus efficiently aggregate information.

\subsection{Setup} 
Let $\Omega$ be the state space, $\mathcal{I} = \{1, 2, \ldots, T, \ldots\}$ represent index of the experts, and $Y$ be the binary outcome of interest. We model experts' private signals and shared signals as constructed from some `atomic' information. For every finite, non-empty subset of experts $S \subset \mathcal{I}$, we define random variable $Z_S: \Omega \rightarrow \mathbb{R}$. We interpret $Z_S$ as a component `sub-signal' that constitutes the information that every expert $t \in S$ possesses and no other expert has. ${\bf Z}_{S} = (Z_{S_1}, Z_{S_2}, \ldots )$, where $S_i \supseteq S$, is the vector of `atomic' pieces of information that construct $Z_S$, the unique information shared only by experts in $S$. $Z_{S_i}$'s are conditionally independent given $Y$. We model ${\bf Z}_{S}$ as a summable sequence, that is, ${\bf Z}_{S}\in l_1$ for all $S\subseteq \mathcal{I}$. Now we can define expert $t$'s \emph{private signal} as $X_t = \sum_{S: t\in S} Z_S = \sum_{i} \left({\bf Z}_{\{t\}}\right)_i$; intuitively an expert $t$'s signal is composed of all the `atoms' $Z_S$ where $t \in S$. We can also define the \emph{shared signal} $X_S = \sum_{S': S \subseteq S'} Z_{S'} = \sum_{i} \left({\bf Z}_{S}\right)_i$ as a component signal that feeds into the private signal of every agent in $S$. Naturally, ${\bf Z}_S$ is a component of ${\bf Z}_t$ for any $t \in S$. We provide a concrete example below:

\begin{example}
    Suppose there are 5 experts, $\mathcal{I} = \{1, 2, 3, 4, 5\}$. $Z_{\{1, 2, 4\}}$ is a random variable that constitutes the signals unique to experts $1, 2,$ and $4$. Then, ${\bf Z}_{\{1, 2, 4\}} = \begin{pmatrix} Z_{\{1,2,4\}}, & Z_{\{1,2,3,4\}}, & Z_{\{1,2,4,5\}}, & Z_{\{1,2,3,4,5\}} \end{pmatrix}$, a vector of all atoms $Z_S$ where $\{1, 2, 4\} \subseteq S$, i.e., all the random variables that make up the signals uniquely seen by experts 1, 2 and 4 only.  We can think of each element of the vector ${\bf Z}_S$ as conditionally independent (given $Y$) atoms used to construct any agent's signal. $X_{\{1,2,4\}} = \sum_i ({\bf Z}_{\{1,2,4\}})_i=Z_{\{1,2,4\}}+Z_{\{1,2,3,4\}}+Z_{\{1,2,4,5\}}+Z_{\{1,2,3,4,5\}}$ is the exact signal that experts 1,2, and 4 share in common, though other experts can see parts of this signal as specified by the atoms in the construction (unlike $Z_{\{1,2,4\}}$ which is information unique to experts 1,2, and 4). Clearly, the atoms $Z_S$ that go into the construction of $X_{\{1,2,4\}}$ are exactly the shared atoms in the construction $X_1$, $X_2$, and $X_{4}$; this is what introduces `informational overlaps' between agents 1, 2, and 4. In a parallel manner, we can construct an agent's observed signal, say $X_4$ as the sum of atoms $Z_S$ where $\{4\} \in S$, i.e., $X_4 = \sum_i ({\bf Z}_{\{4\}})_i$.
\end{example}


Next, we denote the \emph{rationale} associated with signal $X_S$ as $\theta_S \in \mathcal{X}$, where $\mathcal{X}$ is an abstract space of rationales. We model the rationale through it's action: as an abstract object that reveals the finer atomic structure of signals $X_S$ when provided. Since experts will submit reports sequentially that can be observed by all future experts, the provision of rationales in these reports modifies the filtration describing how experts' information evolves, by allowing experts to identify what information is shared and what information isn't. Expert $t$'s $\sigma-$algebra given only their private information is $\sigma(X_t)$, but at time-step $t$, they also observe all previous experts' reports, which may include rationales that modify the filtration. Thus, we give the following characterization of rationales:

\begin{definition}[Rationales]
Let $I_t \subseteq \{1, \ldots, t\}$ specify a subset of experts and $I_t^c = \{1, \ldots, t\}\backslash I_t$.  Formally, we define rationales $\{ \theta_{t'} | t' \in I_t\}$ as a set in $\mathcal{X}^{|I_t|}$ that induces the following $\sigma-$algebra at every time-step:
\begin{equation}
    \mathcal{F}_t = \sigma\left(X_S | S \subseteq I_t  \text{ or } S \in \{\{t'\} : t'\in I_t^c\} \right)
\end{equation}
\end{definition}

In other words,  expert $t$'s $\sigma-$algebra is generated by all component shared signals revealed through rationales from previous experts in $I_t$ (note $t \in I_t$ since expert $t$ always knows their own rationale $\theta_t$), as well as from just the raw signals in $I_t^c$. Observe that $\mathbb{F} = \{ \mathcal{F}_t \}_{t \in \mathcal{I}}$ is a filtration specifying the information known to expert $t$ at any time-step $t$. For clarity of presentation, we stick to the filtration $\mathbb{F}^{(1)} = \{\mathcal{F}_t^{(1)}\}$ where all experts provide rationales, and filtration $\mathbb{F}^{(0)} = \{\mathcal{F}_t^{(0)}\}$ where no experts provide rationales. In these cases, the $\sigma-$algebra at time-step $t$ is:
\begin{equation}
    \mathcal{F}_t^{(0)} = \sigma\left(X_t | t \in \{1,\ldots, t\}\right)  \quad \quad \mathcal{F}_t^{(1)} = \sigma\left(X_S | S \subseteq \{1,\ldots, t\}\right) 
\end{equation}

Intuitively, in the filtration with no rationales, experts are only aware of the reported signals; while in the filtration with rationales, experts are able to observe the shared components that constitute their signals. This captures the idea that rationales enable experts to decompose their own information based on how various pieces are shared with other experts. This observation naturally leads to the observation that rationales can lead to more efficient aggregation of information.

\begin{observation}[Rationales can reveal more information than raw signals]
    The filtration $\mathbb{F}^{(1)}$ is a refinement of filtration $\mathbb{F}^{(0)}$ since $\mathcal{F}_t^{(0)} \subseteq \mathcal{F}_t^{(1)}$ for all $t \geq 1$.
\end{observation}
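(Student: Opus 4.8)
The plan is to verify the containment $\mathcal{F}_t^{(0)}\subseteq\mathcal{F}_t^{(1)}$ generator by generator, and the conceptual crux is a bookkeeping identity already built into the model: the private signal of expert $s$ \emph{is} the shared signal of the singleton coalition $\{s\}$. Directly from the definitions $X_s=\sum_{S:\,s\in S}Z_S$ and $X_{\{s\}}=\sum_{S':\,\{s\}\subseteq S'}Z_{S'}$, the two index sets $\{S: s\in S\}$ and $\{S': \{s\}\subseteq S'\}$ coincide, so
\[
X_s=\sum_{S:\, s\in S}Z_S=\sum_{S':\,\{s\}\subseteq S'}Z_{S'}=X_{\{s\}}
\]
as random variables. (The series is well defined because ${\bf Z}_{\{s\}}\in l_1$, but this plays no further role in the argument.)

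Given this, the first step is: for each $s\in\{1,\dots,t\}$ we have $\{s\}\subseteq\{1,\dots,t\}$, so $X_s=X_{\{s\}}$ is literally one of the random variables in the generating family $\{X_S: S\subseteq\{1,\dots,t\}\}$ of $\mathcal{F}_t^{(1)}=\sigma(X_S: S\subseteq\{1,\dots,t\})$; hence $\sigma(X_s)\subseteq\mathcal{F}_t^{(1)}$. The second step is to take the join over $s=1,\dots,t$: since $\mathcal{F}_t^{(0)}=\sigma(X_1,\dots,X_t)=\bigvee_{s=1}^{t}\sigma(X_s)$ and every $\sigma(X_s)$ is contained in $\mathcal{F}_t^{(1)}$, the smallest $\sigma$-algebra containing all of them, namely $\mathcal{F}_t^{(0)}$, is contained in $\mathcal{F}_t^{(1)}$. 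Finally, this holds for every $t\ge 1$, and both $\mathbb{F}^{(0)}$ and $\mathbb{F}^{(1)}$ are genuine filtrations (each is nondecreasing in $t$ because enlarging the index set $\{1,\dots,t\}$ only adds generators to $\mathcal{F}_t^{(0)}$ and to $\mathcal{F}_t^{(1)}$), so by definition $\mathbb{F}^{(1)}$ refines $\mathbb{F}^{(0)}$. I would close with a one-line remark that the inclusion is generically strict: as soon as some genuinely shared atom makes a shared signal $X_S$ with $|S|\ge 2$ fail to be $\sigma(X_1,\dots,X_t)$-measurable, $\mathcal{F}_t^{(1)}$ \emph{strictly} refines $\mathcal{F}_t^{(0)}$, which is the content underlying the observation's title.

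I do not expect a substantive obstacle; the only point one must not get wrong is the direction of the set identity $\{S:s\in S\}=\{S:\{s\}\subseteq S\}$ — equivalently, the fact that disclosing expert $s$'s rationale, \emph{on its own}, reveals exactly $X_{\{s\}}=X_s$ and nothing more, so the ``belief-only'' information is always recoverable inside the ``rationale'' filtration. All of the real work has already been done in choosing the generating families in the two definitions of $\mathcal{F}_t^{(0)}$ and $\mathcal{F}_t^{(1)}$.
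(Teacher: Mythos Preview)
Your argument is correct: the identity $X_s=X_{\{s\}}$ shows that every generator of $\mathcal{F}_t^{(0)}$ is already a generator of $\mathcal{F}_t^{(1)}$, and the containment follows. The paper offers no proof for this observation---it is stated as self-evident from the definitions of the two filtrations---so you have simply written out the one-line reasoning the paper leaves implicit, and your closing remark on strictness is a welcome addition.
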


Now, we provide a construction of `new information revealed by a rationale' that will make it easier to analyze the information in filtration $\mathbb{F}^{(1)}$ where rationales are revealed. Specifically, we seek to understand: what information does expert $t$'s rationale reveal that no previous expert revealed? To analyze this, we decompose $X_t = V_t + W_t$ where we refer to $W_t = \sum_{S: t\in S, \{1, \ldots, t-1\} \cap S = \emptyset} Z_S$ as expert $t$'s \emph{residual signal} and $V_t$ as expert $t$'s \emph{redundant signal}. $W_t$ can be interpreted as the component of expert $t$'s signal $X_t$ that is not shared with any of the \emph{previous} experts (but potentially shared with \emph{future} experts), while $V_t$ soaks up all the atoms (and hence dependence) shared with previous experts' signals. In other words, $W_t$ represents the `new information' in rationale $\theta_t$, while $V_t$ represents `old information' already revealed by \emph{some} previous expert. The motivation behind this decomposition is to condense the information revealed by rationales in $\mathcal{F}_t^{(1)}$ into independent chunks $W_1, \ldots, W_t$ that can easily be aggregated to compute posterior beliefs over $Y$. We can also define $W_t$ with an inclusion-exclusion style formula:
\begin{equation}
    W_t = \sum_{\substack{S \subseteq \{1, \ldots, t\} \\ t\in S}} (-1)^{|S|+1} X_S
\end{equation}

\begin{observation}[Properties of residual signals]\label{obs:resprop}
    We note the following properties of residual signals $W_t$:

\begin{enumerate}
\item \textbf{Rationales expose experts' residual signals}: The process $(W_t)_{t\geq 1}$ is adapted to the filtration $\mathbb{F}^{(1)}$, but not adapted to the filtration $\mathbb{F}^{(0)}$.
\item \textbf{The first expert's residual signal is their full signal (by construction)}: $W_1 = X_1$, $V_1 = 0$.
    \item \textbf{Expert $t$'s residual signal is conditionally independent (given $Y$) of previous experts' signals}: $X_{t'} \ind W_t|Y$ and $W_{t'} \ind W_t |Y$  for $1 \leq t' < t$. This is because any $X_{t'}, W_{t'}$ are composed with atoms $Z_{S'}$ where $t' \in S'$, while $t' \notin S$ for atoms $Z_S$ that compose $W_t$. \footnote{Note that $X_{t'} \not\ind W_t|Y$ for $t' > t$, i.e., expert $t$'s `new information' will generally still overlap with future experts' information.}
    \item \textbf{Residual signals $W_1, \ldots, W_t$ are sufficient to compute posterior over $Y$: $\mathbb{P}\left(Y=1|\mathcal{F}_t^{(1)}\right) = \mathbb{P}(Y=1| W_1, \ldots, W_t)$}.  This follows from a simple inductive argument. Since $W_1 = X_1$, it holds in the base case. If it holds at time-step $t-1$, then at time-step $t$,
    {\small
    \begin{equation}
        \mathbb{P}\left(Y=1 | \mathcal{F}_t^{(1)}\right) = \mathbb{P}\left(Y=1 | \mathcal{F}_{t-1}^{(1)} \cup \sigma(X_S | S \subseteq \{1, \ldots, t\}, t\in S \right) = \mathbb{P}(Y=1|W_1, \ldots, W_{t-1}, W_t)
    \end{equation}
    }
    since $Y \ind X_S | W_{1}, \ldots, W_{t-1}$ for $t, t' \in S \subseteq \{1, \ldots, t\}$ and any $t' < t$.
    \end{enumerate}
\end{observation}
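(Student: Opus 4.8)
The statement bundles four claims, with the last two carrying the real content; I would dispose of the first two in a couple of lines and put the work into claims 3 and 4. Claim 2 is immediate: at $t=1$ the constraint ``$\{1,\dots,0\}\cap S=\emptyset$'' is vacuous, so $W_1=\sum_{S\ni 1}Z_S=X_1$ and $V_1=X_1-W_1=0$. For claim 1, the ``adapted to $\mathbb F^{(1)}$'' half follows from the inclusion--exclusion identity $W_t=\sum_{S\subseteq\{1,\dots,t\},\,t\in S}(-1)^{|S|+1}X_S$, which exhibits $W_t$ as a finite linear combination of generators of $\mathcal F_t^{(1)}$; I would verify the identity by substituting $X_S=\sum_{S'\supseteq S}Z_{S'}$, swapping the order of summation, and checking that the resulting alternating sum over $S$ leaves the coefficient of $Z_{S'}$ equal to $1$ exactly when $t\in S'$ and $S'\cap\{1,\dots,t-1\}=\emptyset$. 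For ``not adapted to $\mathbb F^{(0)}$'' one instance suffices: with two experts and nondegenerate atoms $Z_{\{1\}},Z_{\{2\}},Z_{\{1,2\}}$ (all other atoms degenerate), $W_2=Z_{\{2\}}$ while $\mathcal F_2^{(0)}=\sigma(X_1,X_2)=\sigma(Z_{\{1\}}+Z_{\{1,2\}},\ Z_{\{2\}}+Z_{\{1,2\}})$, which does not determine $Z_{\{2\}}$.

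For claim 3 I would argue entirely through atom-indices. $W_t$ is a measurable function of the atom family $\{Z_S:\,t\in S,\ S\cap\{1,\dots,t-1\}=\emptyset\}$, whereas $X_{t'}$ and $W_{t'}$ are measurable functions of $\{Z_S:\,t'\in S\}$; for $t'<t$ these two families are disjoint, since the first excludes $t'$ and the second requires it. As the atoms $\{Z_S\}$ are mutually conditionally independent given $Y$, the $\sigma$-algebras generated by disjoint sub-families are conditionally independent given $Y$, and the stated independences follow. The one technical point is that these are $l_1$-convergent infinite sums, so I would note that each is an a.s.\ limit of finite partial sums of the relevant atoms and that both measurability and conditional independence pass to such limits.

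Claim 4 is the substantive one. Instead of the bare induction indicated after the statement, I would first diagonalize the information in $\mathcal F_t^{(1)}$: for $\emptyset\ne A\subseteq\{1,\dots,t\}$ set $Y_A:=\sum_{S':\,S'\cap\{1,\dots,t\}=A}Z_{S'}$, the bundle of all atoms whose footprint on the first $t$ experts is exactly $A$. Then (i) $\mathcal F_t^{(1)}=\sigma\big(Y_A:\emptyset\ne A\subseteq\{1,\dots,t\}\big)$ --- the inclusion $\supseteq$ because $X_B=\sum_{A\supseteq B}Y_A$, and $\subseteq$ by M\"obius inversion over the Boolean lattice of $\{1,\dots,t\}$, which writes each $Y_A$ as a finite signed sum of the $X_B$'s; and (ii) the $Y_A$ are mutually conditionally independent given $Y$, being functions of disjoint atom families (the same disjointness argument as in claim 3). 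A short computation gives $W_{t'}=\sum_{A:\min A=t'}Y_A$ for all $t'\le t$, so $(W_1,\dots,W_t)$ is exactly the vector of class-sums obtained by partitioning the $Y_A$ according to $\min A$. Feeding this into the product form of the conditional likelihood that (ii) provides, claim 4 comes down --- class by class --- to the assertion that within each class $\{A:\min A=t'\}$ the sum $\sum_A Y_A$ retains all the information about $Y$ carried by the individual $Y_A$, i.e.\ the class-sum is a sufficient statistic for $Y$.

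This last within-class sufficiency is the step I expect to be the genuine obstacle, and I do not see how to obtain it from the model as stated. For $t'<t$ the class $\{A:\min A=t'\}$ always contains both $\{t'\}$ and $\{t',t\}$, and for generic conditionally-independent atoms the conditional law of a part of a sum given the whole sum still depends on $Y$ --- e.g.\ two experts with Gaussian atoms $Z_{\{1\}},Z_{\{1,2\}}$ of unequal conditional means make $\mathbb P(Y{=}1\mid\mathcal F_2^{(1)})$ a function of $Z_{\{1\}}+2Z_{\{1,2\}}$, which is not determined by $(W_1,W_2)=(Z_{\{1\}}+Z_{\{1,2\}},\ Z_{\{2\}})$. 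So my concrete plan would be to complete the argument under an additional structural hypothesis on the atoms that makes class-sums sufficient --- a shared exponential-family form, with Gaussian atoms of common precision the natural candidate and consistent with the Gaussian instantiations the paper alludes to --- while flagging that, absent such a hypothesis, claim 4 should be read as saying the posterior is recoverable from the full collection $\{Y_A\}$ (equivalently, from $\mathcal F_t^{(1)}$ itself) rather than from $(W_1,\dots,W_t)$ alone. Claims 1--3 and the reduction in the previous paragraph go through unconditionally.
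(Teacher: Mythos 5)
Your treatment of claims 1--3 is correct and essentially what the paper does, only spelled out in more detail: adaptedness via the inclusion--exclusion identity (your coefficient check is right), $W_1=X_1$ by construction, and conditional independence in claim 3 from disjointness of the underlying atom families --- the paper gives exactly that one-line argument, without your (welcome) remark about passing to $l_1$-limits.

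On claim 4 you have identified a genuine issue rather than missed a step. The paper's own justification is the displayed induction, whose engine is the bald assertion that $Y \ind X_S \mid W_1,\ldots,W_{t-1}$ for sets $S$ containing $t$ and some earlier $t'$; this is never derived from the atomic model, and your diagonalization via the $Y_A$'s shows it is precisely the within-class sufficiency you isolate. Your two-expert example is valid: if $Z_{\{1\}}$ and $Z_{\{1,2\}}$ are conditionally Gaussian with unequal mean-to-variance ratios, the conditional law of $X_{\{1,2\}}=Z_{\{1,2\}}$ given $W_1=Z_{\{1\}}+Z_{\{1,2\}}$ still depends on $Y$, so both the asserted conditional independence and the conclusion $\mathbb{P}(Y=1\mid\mathcal{F}_t^{(1)})=\mathbb{P}(Y=1\mid W_1,\ldots,W_t)$ can fail under nothing more than conditionally independent, summable atoms. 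What rescues the observation is exactly the structure you conjecture and that the paper imposes in Section 3.2: every component signal and residual is conditionally Gaussian with mean and variance scaled by a common factor ($\rho\mu,\rho\sigma^2$, or $f_{\alpha,t}\mu, f_{\alpha,t}\sigma^2$), so every log-likelihood ratio is the same constant $\frac{2\mu}{\sigma^2}$ times the signal itself (this is what licenses $\psi_t=\frac{2\mu}{\sigma^2}W_t$ later); plain sums are then sufficient within each of your classes, and the paper's conditional-independence step holds. So your proposed reading --- prove claim 4 under a common Gaussian/exponential-family hypothesis and regard the abstract statement as implicitly conditional on it --- is the correct one; the paper simply asserts the key independence inside the observation and only makes the needed distributional structure explicit in the subsequent assumptions.
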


Finally, we emphasize two desirable properties of our model: (1) an expert's rationale provides no private benefit, i.e., an expert's own rationale does not refine their own information, even if they observe previous experts' signals (though it may help refine \emph{future} experts' information); (2) rationales are only useful when used in conjunction with \emph{other rationales}; a rationale on its own cannot help identify what information is shared and what information is novel.

\subsection{Information Structure}
The setup above provided an abstract construction of rationales that operationalize the commonsense idea that they reveal more information than raw signals. Now, we impose further structure on experts' information to develop a tractable model to analyze \emph{how much} more information rationales reveal. We begin with standard assumptions on the prior and distribution of experts' signals.

\begin{assumption}[Common prior $\pi$]
    We assume all agents share a common prior over the outcome $\pi = P(Y=1)$ and this prior is common knowledge. We write the prior log odds as $\lambda_\pi = \log\left(\frac{\pi}{1-\pi}\right)$.
\end{assumption}

\begin{assumption}[Conditionally Gaussian signals]
    We assume expert $t$'s signal $X_t$ conditional on outcome $Y$ is distributed as $X_t|(Y=1) \sim \mathcal{N}(\mu, \sigma^2)$ and $X_t|(Y=0) \sim \mathcal{N}(-\mu, \sigma^2)$ with $\mu > 0$.
\end{assumption}

Next, we impose structure on the information shared by any two agents. Specifically, we assume that the shared signal between any two experts $X_{\{t, t'\}}$ is also conditionally normally distributed.

\begin{assumption}[Conditionally Gaussian pairwise shared signals]
    For any pair of experts $\{t, t'\}$, we assume the shared signal $X_{\{t, t'\}}$ conditional on outcome $Y$ is distributed as $X_{\{t, t'\}}|(Y=1) \sim \mathcal{N}(\rho\mu, \rho\sigma^2)$ and $X_{\{t, t'\}}|(Y=0) \sim \mathcal{N}(-\rho\mu, \rho\sigma^2)$ with $\mu > 0$ for some $0 < \rho \leq 1$.
\end{assumption}

As a general notational point, we will write conditional means with the condensed notation $\pm \mu$, where $Y=1$ gives $+\mu$ and $Y=0$ gives $-\mu$. Now, observe that by our construction:
{\small
\begin{equation}
    \text{Cov}(X_t, X_t' | Y) = \text{Cov}\left(\sum_{S:t \in S} Z_S, \sum_{S':t' \in S'} Z_{S'} \left.\right| Y\right) = \sum_{S: t \in S}\sum_{S': t' \in S'} \text{Cov}\left(Z_S, Z_{S'} | Y \right)
\end{equation}
}
From the conditional independence of atomic signals, this reduces $\text{Cov}(X_t, X_t' | Y) = \sum_{\{S: \{t, t'\} \subseteq S\}} \text{Var}\left(Z_S | Y\right) = \text{Var}\left(X_{\{t, t'\}} | Y\right) = \rho\sigma^2$. This then gives:

\begin{observation}[Expert signals as a conditional Gaussian process]
    Given outcome $Y$, the collection of expert signals $\left( X_t \right)_{t\geq 1}$ is a Gaussian process with mean $\pm \mu$ and covariance $\text{Cov}(X_t, X_{t'}) = (\rho + \delta_{t, t'}(1-\rho)) \sigma^2$. For any collection of experts $S \subset \mathcal{I}$, the covariance matrix is $\Sigma_S = \sigma^2((1-\rho)\mathbb{I} + \rho\mathds{1}\mathds{1}^T)$.
\end{observation}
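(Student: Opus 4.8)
The plan is to check the three things that define the claimed object: conditional on $Y$, (i) every finite subcollection $(X_{t_1},\dots,X_{t_n})$ is multivariate normal — which is exactly what it means for $(X_t)_{t\ge 1}$ to be a Gaussian process — (ii) the mean vector is $\pm\mu\,\mathds{1}$, and (iii) the covariance matrix is $\sigma^2\big((1-\rho)\mathbb{I}+\rho\,\mathds{1}\mathds{1}^T\big)$. Parts (ii) and (iii) are essentially already spelled out in the text; part (i) is where a small amount of care is needed.

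For (i), I would work with the Gaussian instantiation of the atomic model: conditional on $Y$ the atoms $Z_S$ are independent Gaussians, with conditional means and variances that are summable over each family $\{S:t\in S\}$. Assumptions 2 and 3 are precisely the constraints these must satisfy, namely $\sum_{S\ni t}\operatorname{Var}(Z_S\mid Y)=\sigma^2$, $\sum_{S\ni t}\mathbb{E}[Z_S\mid Y{=}1]=\mu$, and analogously $\sum_{S\supseteq\{t,t'\}}\operatorname{Var}(Z_S\mid Y)=\rho\sigma^2$, $\sum_{S\supseteq\{t,t'\}}\mathbb{E}[Z_S\mid Y{=}1]=\rho\mu$. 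Now fix a finite index set and scalars $a_1,\dots,a_n$, write $\sum_k a_k X_{t_k}=\sum_k a_k\sum_{S\ni t_k}Z_S$, and interchange the two sums (absolutely convergent by the summability of conditional variances) to obtain $\sum_S\big(\sum_{k:\,t_k\in S}a_k\big)Z_S$, a countable linear combination of conditionally independent Gaussians whose partial sums converge in $L^2$ given $Y$. Since an $L^2$ limit of Gaussians is Gaussian, every such linear combination is conditionally Gaussian, so the finite-dimensional conditional laws are multivariate normal and $(X_t)_{t\ge1}$ is a conditional Gaussian process.

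For (ii) and (iii), the conditional mean is read off directly from Assumption 2, giving $\mathbb{E}[X_t\mid Y]=\pm\mu$. The diagonal of the conditional covariance is $\operatorname{Var}(X_t\mid Y)=\sigma^2$, again by Assumption 2. The off-diagonal is the computation in the display immediately preceding the statement: expanding $\operatorname{Cov}(X_t,X_{t'}\mid Y)$ bilinearly, conditional independence of distinct atoms annihilates every cross term, and the surviving $\sum_{S:\{t,t'\}\subseteq S}\operatorname{Var}(Z_S\mid Y)$ equals $\operatorname{Var}(X_{\{t,t'\}}\mid Y)=\rho\sigma^2$ by Assumption 3. Hence for any finite $S$ the covariance matrix has $\sigma^2$ on the diagonal and $\rho\sigma^2$ off it, which is $\sigma^2\big((1-\rho)\mathbb{I}+\rho\,\mathds{1}\mathds{1}^T\big)$; equivalently $\operatorname{Cov}(X_t,X_{t'})=(\rho+\delta_{t,t'}(1-\rho))\sigma^2$.

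The main obstacle — really the only non-routine point — is the analytic justification inside (i): that the defining series for each $X_t$ converges in $L^2$ conditional on $Y$ (so that Gaussianity survives the limit) and that the double sum may be reordered. Both follow from the $l_1$/summability hypothesis on $\mathbf{Z}_S$ together with $\operatorname{Var}(X_t\mid Y)=\sigma^2<\infty$, which forces $\sum_{S\ni t}\operatorname{Var}(Z_S\mid Y)<\infty$; I would state this convergence explicitly rather than manipulate the infinite sums formally. Everything else is bookkeeping over the subset lattice.
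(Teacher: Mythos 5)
Your proposal is correct and follows essentially the same route as the paper: the paper's only stated justification is the bilinear expansion of $\mathrm{Cov}(X_t,X_{t'}\mid Y)$ with conditional independence of atoms killing the cross terms and the surviving sum identified with $\mathrm{Var}(X_{\{t,t'\}}\mid Y)=\rho\sigma^2$, which is exactly your part (iii), with the mean and diagonal read off Assumption 2 as in your part (ii). Your part (i) — making the joint conditional Gaussianity explicit by treating the atoms as conditionally independent Gaussians and passing $L^2$ limits of the (absolutely convergent, by the $l_1$ hypothesis) series — is a tightening of a step the paper leaves implicit in the construction rather than a different argument, and it is a reasonable way to fill that gap.
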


Thus, we have now fully specified the joint distribution of the process $\left( X_t \right)_{t\geq 1}$ which is adapted to the filtration $\mathbb{F}^{(0)}$, and we turn our attention to the process $\left( W_t\right)_{t\geq 1}$ adapted to filtration $\mathbb{F}^{(1)}$. Previous assumptions do not fully determine the distribution of $W_t$ since this depends on the structure of $n$-way overlaps in information; the fact that $(X_1, \ldots, X_t)$ is jointly normal with constant covariance imposes some constraints, but there is still some freedom to specify this overlap. One extreme possibility is that all pairwise overlaps in information are \emph{commonly shared} (i.e., $V_t = V_{t'}$ for all $t, t'$)\footnote{Strictly speaking, our model does not allow $V_t = V_{t'}$ for all $t, t'$ since this would require $V_t = Z_S$ where $S = \mathcal{I}$, but our construction is only defined for finite $S$. Instead, we think of this as the `limiting case.'}, which forces $W_t|Y \sim \mathcal{N}\left((1-\rho)\mu, (1-\rho)\sigma^2\right)$; importantly, note that $\text{Var}(W_t)$ is constant in $t$, so every expert contributes a constant amount of new information indefinitely. The other extreme possibility is that all pairwise overlaps are distinct, so $\text{Var}(W_t | Y) = \max\{ (1-(t-1)\rho)\sigma^2, 0\}$, but this doesn't let us cleanly consider more than $T >  \frac{1}{\rho}$ experts. In order to analyze the case where the amount of new information $\text{Var}(W_t)$ is diminishing as `fast' as possible, we take the \emph{alternative} extreme to be a `self-similarity' condition on $n$-way overlaps: where $\text{Cov}({X}_S, {X}_{S'}) = \rho \sqrt{\text{Var}({X}_S)\text{Var}({X}_{S'})}$ for any $S, S' \subset \mathcal{I}$. This then forces $W_t|Y \sim \mathcal{N}\left((1-\rho)^{t-1}\mu, (1-\rho)^{t-1}\sigma^2\right)$. We can then introduce a parameter $0 < \alpha \leq 1$ that determines the structure of the $n$-way overlap and interpolates between these two extremes. Intuitively, we can think of $\alpha$ as determining how quickly marginal experts run out of new information; small $\alpha$ means the marginal expert is adding substantial new information, while large $\alpha$ means the marginal expert's new information is quickly diminishing. Lastly, since $W_t$ captures information not revealed by any previous expert, we also have  $\text{Cov}(W_t, W_{t'}) = 0$ for all $t, t'$. Thus, we formally state our parameterization of the distribution of $W_t$.

\begin{assumption}[Expert residual signals as a conditional Gaussian process]
    Given outcome $Y$, the collection of expert residual signals $\left( W_t \right)_{t\geq 1}$ is a Gaussian process with mean $\pm f_{\alpha, t}\mu$ and covariance $\text{Cov}(W_t, W_{t'}) = \delta_{t, t'}f_{\alpha,t} \sigma^2$, where $f_{\alpha, 1} = 1$ and $f_{\alpha, t} = (1-\rho)^{1+(t-2)\alpha}$ for $t \geq 2$ and $0 < \alpha \leq 1$. For any collection of experts $S \subset \mathcal{I}$, the covariance matrix is $\Sigma_S = f_{\alpha, t}\sigma^2\mathbb{I}$.
\end{assumption}

\subsection{Signals to Log Likelihoods}
Now, we make an important observation (also noted by \citet{dasaratha2019aggregative}) that will make our model much easier to analyze: given our Gaussian assumptions on the signal distribution above, experts' posterior log odds are also Gaussian.\footnote{This follows from the assumption that the variance is the same when $Y=0$ and $Y=1$ (means can be arbitrarily different but re-centered and re-labeled to induce the same posterior belief). If we allow variances to differ, $\lambda_t$ has a generalized non-central chi-squared distribution with severe tractability issues.} 

\begin{observation}[Log likelihood ratios are normally distributed]
    For any signal $X$ distributed as $X|Y \sim \mathcal{N}\left( \pm \mu, \sigma^2\right)$ with $\mu > 0$, the log likelihood ratio of observing signal $X$ is $\lambda_t = \frac{2\mu}{\sigma^2}X$:
    {\small
\begin{equation}
    \begin{split}
    \lambda_{t} = \log \left( \frac{P(X_t|Y=1)}{P(X_t|Y=0)}\right) = \log \left( \frac{\exp\left( -\frac{(X_t - \mu)^2}{2\sigma^2}  \right)}{\exp\left( -\frac{(X_t + \mu)^2}{2\sigma^2} \right)}\right) =  \left( -\frac{(X_t - \mu)^2}{2\sigma^2} + \frac{(X_t + \mu)^2}{2\sigma^2}  \right) 
    =  \frac{2\mu}{\sigma^2}X 
    \end{split}
\end{equation}
}
Consequently, $\lambda_t|Y \sim \mathcal{N}\left(\pm \tau, 2\tau \right)$ where $\tau = \frac{2\mu^2}{\sigma^2}$.
\end{observation}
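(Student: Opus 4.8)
The plan is to verify the two assertions in turn, both by direct computation. First I would write out the two conditional densities explicitly. Because $X \mid (Y=1) \sim \mathcal{N}(\mu, \sigma^2)$ and $X \mid (Y=0) \sim \mathcal{N}(-\mu, \sigma^2)$ have the \emph{same} variance, the normalizing constants $1/\sqrt{2\pi\sigma^2}$ are identical and cancel in the ratio $P(X\mid Y=1)/P(X\mid Y=0)$. Taking logarithms leaves only the difference of the two quadratic exponents, and since $(X+\mu)^2 - (X-\mu)^2 = 4\mu X$ we obtain $\lambda = \frac{2\mu}{\sigma^2}X$ — exactly the displayed chain of equalities in the statement, so for this part nothing beyond elementary algebra is required. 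The structural point being used is the equal-variance assumption: it is what makes $\lambda$ an affine function of $X$ rather than a quadratic one.

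For the second assertion I would note that $\lambda = \frac{2\mu}{\sigma^2}X$ is a deterministic affine function of $X$, and an affine image of a Gaussian random variable is again Gaussian; hence $\lambda \mid Y$ is normal and it suffices to compute its conditional mean and variance. Conditional on $Y=1$, $\mathbb{E}[\lambda \mid Y=1] = \frac{2\mu}{\sigma^2}\,\mathbb{E}[X\mid Y=1] = \frac{2\mu^2}{\sigma^2} = \tau$ and $\text{Var}(\lambda\mid Y=1) = \left(\frac{2\mu}{\sigma^2}\right)^2 \text{Var}(X\mid Y=1) = \frac{4\mu^2}{\sigma^2} = 2\tau$; the $Y=0$ case is identical except that the mean picks up a minus sign. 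Combining, $\lambda\mid Y \sim \mathcal{N}(\pm\tau, 2\tau)$ with $\tau = 2\mu^2/\sigma^2$, as claimed.

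There is no genuine obstacle here; the observation is in essence a one-line change of variables. The only point I would highlight is the identity $\text{Var}(\lambda\mid Y) = 2\,|\mathbb{E}[\lambda\mid Y]|$: this is not a coincidence but a consequence of the symmetric signal model (means $\pm\mu$, common variance $\sigma^2$), and it is convenient downstream because it means each signal's log-likelihood contribution is summarized by the single scalar $\tau$, which behaves like an ``information content.'' I would also record in passing that the same argument applies verbatim to the shared signals $X_S$ and the residual signals $W_t$, since these too are conditionally Gaussian with a variance that does not depend on $Y$: substituting their mean/variance pairs (e.g.\ $\pm\rho\mu$ and $\rho\sigma^2$ for $X_{\{t,t'\}}$) again yields a Gaussian log-likelihood ratio whose variance is twice its absolute mean (here $\mathcal{N}(\pm\rho\tau, 2\rho\tau)$), so the observation propagates through the entire information structure with no extra work.
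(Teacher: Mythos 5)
Your proposal is correct and matches the paper's own argument, which is just the displayed chain of equalities (equal variances cancel the normalizing constants, the quadratics difference to $4\mu X$) followed by reading off the mean $\pm\tau$ and variance $2\tau$ of the affine transform $\lambda = \frac{2\mu}{\sigma^2}X$. Your added remarks on the identity $\mathrm{Var}(\lambda\mid Y) = 2|\mathbb{E}[\lambda\mid Y]|$ and its applicability to $X_S$ and $W_t$ are consistent with how the paper uses this observation later (e.g.\ in Observations \ref{obs:llpriv} and \ref{obs:llresi}), but they are not needed for the proof itself.
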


This observation will allow us to work exclusively with the posterior log odds transformed version of experts' signals, so we can talk directly about what experts \emph{believe} instead of what experts \emph{observe}. Indeed, going forward, we will only consider the log likelihood transformed versions of signals.

\begin{observation}[Log likelihood ratios of private signals]\label{obs:llpriv}
    Let $\lambda_t = \frac{2\mu}{\sigma^2} X_t$. Then $\left(\lambda_t\right)_{t\geq 1}$ is a Gaussian process adapted to the filtration $\mathbb{F}^{(0)}$, with mean $\pm \tau = \pm \frac{2\mu^2}{\sigma^2}$ and covariance $\text{Cov}(\lambda_t, \lambda_{t'}) = 2\tau (\rho + \delta_{t, t'}(1-\rho))$.
\end{observation}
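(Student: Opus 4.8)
The plan is to obtain this observation as a direct corollary of the two immediately preceding observations: that $(X_t)_{t\geq 1}$ is, conditional on $Y$, a Gaussian process with mean $\pm\mu$ and covariance $\text{Cov}(X_t, X_{t'}\mid Y) = (\rho + \delta_{t,t'}(1-\rho))\sigma^2$, and that the log likelihood ratio of a conditionally $\mathcal{N}(\pm\mu,\sigma^2)$ signal equals the linear transformation $\frac{2\mu}{\sigma^2}X_t$. Since a fixed-scalar (affine) image of a Gaussian process is again a Gaussian process, and $\lambda_t = \frac{2\mu}{\sigma^2}X_t$ is exactly such an image applied coordinatewise with the constant $\frac{2\mu}{\sigma^2}$, the process $(\lambda_t)_{t\geq 1}$ is automatically a conditional Gaussian process given $Y$; all that remains is to read off its first two conditional moments and check adaptedness.

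First I would verify the adaptedness claim: because $\mu>0$ and $\sigma^2>0$ are fixed constants, $x\mapsto \frac{2\mu}{\sigma^2}x$ is a Borel-measurable bijection of $\mathbb{R}$, so $\sigma(\lambda_1,\ldots,\lambda_t) = \sigma(X_1,\ldots,X_t) = \mathcal{F}_t^{(0)}$; in particular $(\lambda_t)_{t\geq 1}$ is adapted to (and in fact generates) $\mathbb{F}^{(0)}$.

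Next I would compute the conditional moments. Linearity of expectation gives $\mathbb{E}[\lambda_t\mid Y] = \frac{2\mu}{\sigma^2}\,\mathbb{E}[X_t\mid Y] = \frac{2\mu}{\sigma^2}(\pm\mu) = \pm\frac{2\mu^2}{\sigma^2} = \pm\tau$, using the definition $\tau = \frac{2\mu^2}{\sigma^2}$. Bilinearity of covariance gives $\text{Cov}(\lambda_t,\lambda_{t'}\mid Y) = \left(\frac{2\mu}{\sigma^2}\right)^2 \text{Cov}(X_t,X_{t'}\mid Y) = \frac{4\mu^2}{\sigma^4}(\rho + \delta_{t,t'}(1-\rho))\sigma^2 = \frac{4\mu^2}{\sigma^2}(\rho + \delta_{t,t'}(1-\rho))$, and since $\frac{4\mu^2}{\sigma^2} = 2\tau$ this equals $2\tau(\rho + \delta_{t,t'}(1-\rho))$, as claimed. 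As a sanity check, setting $t=t'$ recovers $\text{Var}(\lambda_t\mid Y) = 2\tau$, matching the marginal statement $\lambda_t\mid Y \sim \mathcal{N}(\pm\tau, 2\tau)$ of the preceding observation.

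There is essentially no serious obstacle here — the statement is bookkeeping layered on results already established. The only point meriting a little care is keeping the conditioning on $Y$ explicit throughout: the assertions ``Gaussian process,'' the stated mean, and the stated covariance are all statements about the law \emph{given} $Y$, not the unconditional law, which is a two-component Gaussian mixture. If one wanted the joint description of $(Y,(\lambda_t)_{t\geq 1})$ one would simply add that $Y\sim\text{Bernoulli}(\pi)$ and that the above Gaussian-process description holds conditional on each value of $Y$; but this is not needed for the observation as stated.
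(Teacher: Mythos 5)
Your proposal is correct and matches the paper's treatment: the paper states this observation without further proof precisely because it is the coordinatewise linear image $\lambda_t = \frac{2\mu}{\sigma^2}X_t$ of the conditional Gaussian process $(X_t)_{t\geq 1}$, with the mean and covariance read off exactly as you compute them. Your explicit checks of adaptedness and of the conditional moments (including the $t=t'$ sanity check giving $\mathrm{Var}(\lambda_t\mid Y)=2\tau$) are the same bookkeeping the paper leaves implicit.
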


\begin{observation}[Log likelihood ratios of residual signals]\label{obs:llresi}  
    Let $\psi_t = \frac{2\mu}{\sigma^2}W_t$. Then $\left(\psi_t\right)_{t \geq 1}$ is a Gaussian process adapted to the filtration $\mathbb{F}^{(1)}$, with mean $\pm f_{\alpha, t}\tau$ and covariance $\text{Cov}(\psi_t, \psi_{t'}) = 2\delta_{t,t'}f_{\alpha, t} \tau$.
\end{observation}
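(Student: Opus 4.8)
The plan is to obtain this statement as an immediate consequence of the Assumption characterizing $(W_t)_{t\geq 1}$ as a conditionally Gaussian process, combined with the fact that $\psi_t = \frac{2\mu}{\sigma^2}W_t$ is just a deterministic rescaling by the fixed scalar $c = \frac{2\mu}{\sigma^2}$. First I would record the Gaussian-process property: conditional on $Y$, any finite sub-collection $(W_{t_1},\ldots,W_{t_k})$ is jointly Gaussian by assumption, and $(\psi_{t_1},\ldots,\psi_{t_k}) = c\,(W_{t_1},\ldots,W_{t_k})$ is a linear image of that Gaussian vector, hence Gaussian; so $(\psi_t)_{t\geq 1}$ is a Gaussian process conditional on $Y$.

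Next I would push the mean and covariance through the scaling. From $\mathbb{E}[W_t \mid Y] = \pm f_{\alpha,t}\mu$ we get $\mathbb{E}[\psi_t \mid Y] = \frac{2\mu}{\sigma^2}(\pm f_{\alpha,t}\mu) = \pm f_{\alpha,t}\cdot\frac{2\mu^2}{\sigma^2} = \pm f_{\alpha,t}\tau$, using $\tau = \frac{2\mu^2}{\sigma^2}$. From $\text{Cov}(W_t,W_{t'}\mid Y) = \delta_{t,t'}f_{\alpha,t}\sigma^2$ we get $\text{Cov}(\psi_t,\psi_{t'}\mid Y) = \left(\frac{2\mu}{\sigma^2}\right)^2\delta_{t,t'}f_{\alpha,t}\sigma^2 = \delta_{t,t'}f_{\alpha,t}\cdot\frac{4\mu^2}{\sigma^2} = 2\delta_{t,t'}f_{\alpha,t}\tau$, which is exactly the claimed covariance; in particular $\text{Var}(\psi_t\mid Y) = 2f_{\alpha,t}\tau$, so the mean-to-variance ratio of $1:2$ seen for $\lambda_t$ is preserved here as well.

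Finally I would handle adaptedness: by part 1 of Observation \ref{obs:resprop}, $(W_t)_{t\geq 1}$ is adapted to $\mathbb{F}^{(1)}$, and since $\psi_t$ is a Borel function of $W_t$ it is $\mathcal{F}_t^{(1)}$-measurable, so $(\psi_t)_{t\geq 1}$ is adapted to $\mathbb{F}^{(1)}$. As a consistency check with the preceding ``signals to log likelihoods'' discussion, I would verify that $\frac{2\mu}{\sigma^2}W_t$ is genuinely the log likelihood ratio of the residual signal: applying the LLR formula to $W_t\mid Y \sim \mathcal{N}(\pm f_{\alpha,t}\mu,\, f_{\alpha,t}\sigma^2)$ yields coefficient $\frac{2(f_{\alpha,t}\mu)}{f_{\alpha,t}\sigma^2} = \frac{2\mu}{\sigma^2}$, so the $f_{\alpha,t}$ factors cancel and the same scalar $c$ works for every $t$. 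There is essentially no hard step here; the only points to be careful about are that ``$\text{Cov}$'' throughout denotes conditional covariance given $Y$ (and similarly that the Gaussian-process claim is a statement about the conditional law), and that the $W_t$ are uncorrelated across $t$ even though they are constructed from dependent atoms — both of which are already supplied by the stated Assumption, leaving only bookkeeping with the constant $\frac{2\mu}{\sigma^2}$.
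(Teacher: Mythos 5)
Your proposal is correct and matches the paper's (implicit) treatment: the paper states this observation as a direct consequence of the assumption on $(W_t)_{t\geq 1}$ together with the log-likelihood-ratio rescaling $\psi_t = \frac{2\mu}{\sigma^2}W_t$, which is exactly the bookkeeping you carry out. Your consistency check that the $f_{\alpha,t}$ factors cancel in the LLR coefficient, so the same scalar works for every $t$, is a nice touch that the paper leaves unstated.
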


We can also define $\phi_t = \frac{2\mu}{\sigma^2}V_t$, so $\lambda_t = \phi_t + \psi_t$. Intuitively, we can interpret $\lambda_t$ as the log likelihood update to the prior induced by reading expert $t$'s rationale. Similarly, we can interpret $\psi_i$ as the log likelihood update to the prior induced by reading \emph{only the information in expert $t$'s rationale that no previous expert had written about}, and $\phi_t$ as the log likelihood update to the prior induced by reading all information in expert $t$'s rationale that had already been revealed by \emph{some} prior expert.

\section{Expert Aggregation} \label{sec:agg}

Having laid out the information each expert $t$ has at time-step $t$, we now turn our attention to the question of what the expert believes after observing all this information. After the expert observes their private signal $X_t$, their private posterior log odds is simply $\gamma = \lambda_t + \lambda_\pi$. However, at time-step $t$, expert $t$ also has access to $\mathcal{F}_t^{(0)}$ when no previous experts submitted rationales, and to $\mathcal{F}^{(1)}$ when all previous experts submitted rationales. We consider each of these in turn. In the former case, we focus on the adapted process $(\lambda_t)_{t\geq 1}$, and in the latter case, we will focus on the adapted process $(\psi_t)_{t\geq 1}$.

\subsection{Aggregation without rationales}

When no experts submit rationales, the expert at time-step $t$ only has access to $\mathcal{F}_t^{(0)} = \sigma\left(X_t | t \in \{1,\ldots, t\}\right)$. From Observation \ref{obs:llpriv}, this means that the expert can observe $\{\lambda_1, \ldots, \lambda_{t}\}$. Thus, the expert's aggregate belief given all information at time-step $t$ is $\mathbb{P}(Y=1|\mathcal{F}_t^{(0)})  = \mathbb{P}(Y=1 | \lambda_1, \ldots, \lambda_t)$. We give the following lemma to show how to compute this posterior log odds.

\begin{lemma}[Aggregation of Correlated Gaussian Signals]\label{lem:agg}
    Let $Y$ be a binary outcome with prior log odds $\lambda_\pi$ and $\Lambda$ be a vector of signals distributed as $\Lambda_t|Y=1 \sim \mathcal{N}(\mu, \Sigma)$ and $\Lambda|Y=0 \sim \mathcal{N}(-\mu, \Sigma)$. Then, the Bayesian posterior log odds that aggregates the signals is $\gamma = 2\mu^T \Sigma^{-1} \Lambda + \lambda_\pi$.
\end{lemma}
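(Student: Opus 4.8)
The plan is to compute the posterior density of $Y$ given the observed signal vector $\Lambda$ via Bayes' rule and then take the log-odds, which should cause the quadratic-in-$\Lambda$ terms of the Gaussian exponent to cancel, leaving an affine function of $\Lambda$. Concretely, I would write
\begin{equation}
    \gamma = \log\frac{\mathbb{P}(Y=1\mid\Lambda)}{\mathbb{P}(Y=0\mid\Lambda)} = \log\frac{\mathbb{P}(Y=1)}{\mathbb{P}(Y=0)} + \log\frac{p(\Lambda\mid Y=1)}{p(\Lambda\mid Y=0)} = \lambda_\pi + \log\frac{p(\Lambda\mid Y=1)}{p(\Lambda\mid Y=0)},
\end{equation}
so the whole task reduces to evaluating the log-likelihood-ratio term. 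This mirrors the univariate computation already carried out in the ``Log likelihood ratios are normally distributed'' observation, just in the multivariate correlated setting.

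Next I would substitute the multivariate normal densities $p(\Lambda\mid Y=1) \propto \exp\!\left(-\tfrac12 (\Lambda-\mu)^T\Sigma^{-1}(\Lambda-\mu)\right)$ and $p(\Lambda\mid Y=0) \propto \exp\!\left(-\tfrac12 (\Lambda+\mu)^T\Sigma^{-1}(\Lambda+\mu)\right)$. The normalizing constants $((2\pi)^{n/2}|\Sigma|^{1/2})^{-1}$ are identical in numerator and denominator (same $\Sigma$), so they drop out. Expanding both quadratic forms, the terms $\Lambda^T\Sigma^{-1}\Lambda$ and $\mu^T\Sigma^{-1}\mu$ appear with the same sign in both exponents and cancel in the difference; what survives is the cross term, giving
\begin{equation}
    \log\frac{p(\Lambda\mid Y=1)}{p(\Lambda\mid Y=0)} = -\tfrac12\left[(\Lambda-\mu)^T\Sigma^{-1}(\Lambda-\mu) - (\Lambda+\mu)^T\Sigma^{-1}(\Lambda+\mu)\right] = 2\mu^T\Sigma^{-1}\Lambda,
\end{equation}
using symmetry of $\Sigma^{-1}$ so that $\Lambda^T\Sigma^{-1}\mu = \mu^T\Sigma^{-1}\Lambda$. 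Combining with the prior term yields $\gamma = 2\mu^T\Sigma^{-1}\Lambda + \lambda_\pi$, as claimed.

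There is essentially no serious obstacle here — the result is a standard fact about Gaussian discriminant analysis (the linear discriminant). The only things to be careful about are: (i) confirming $\Sigma$ is invertible, which holds under the paper's assumptions since $\Sigma_S = \sigma^2((1-\rho)\mathbb{I}+\rho\mathds{1}\mathds{1}^T)$ is positive definite for $0<\rho\le 1$ (eigenvalues $\sigma^2(1-\rho)$ with multiplicity $n-1$ and $\sigma^2(1-\rho+n\rho)$, all positive); (ii) being consistent about whether $\mu$ denotes a scalar or the mean vector — in the lemma statement $\mu$ should be read as the mean vector (e.g. $\mu\mathds{1}$ in the equicorrelated case), and the expression $2\mu^T\Sigma^{-1}\Lambda$ is then unambiguous; and (iii) noting that the shared variance across the two classes is exactly what makes the quadratic terms cancel, which is why the posterior log-odds is affine rather than quadratic in $\Lambda$ — a point the paper already flagged in its footnote. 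I would present the proof in three short displayed lines: Bayes' rule split, the density substitution and cancellation, and the recombination.
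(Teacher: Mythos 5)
Your proposal is correct and follows essentially the same route as the paper's own proof: apply Bayes' rule to split off the prior log odds, substitute the two multivariate Gaussian densities with common $\Sigma$, and observe that the quadratic terms cancel, leaving $2\mu^T\Sigma^{-1}\Lambda + \lambda_\pi$. Your added remarks on invertibility of $\Sigma$ and on reading $\mu$ as the mean vector are sensible but not needed beyond what the paper already assumes.
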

\begin{proof} The agent's posterior log odds are computed as follows:
    \begin{equation}
    \begin{split}
        \log \frac{P(Y= 1| \Lambda_t)}{P(Y= 0|\Lambda_t)} &= \log \frac{P(\Lambda_t | Y=1)}{P(\Lambda_t|Y=0)} + \log \frac{P(Y=1)}{P(Y=0)} \\
         &= \log \frac{\exp{\left( -\frac12 ({\Lambda_t} - {\mu})^T\Sigma^{-1}({\Lambda_t}-{\mu})\right)}}{\exp{\left( -\frac12 ({\Lambda_t} + {\mu})^T\Sigma^{-1}({\Lambda_t}+ {\mu})\right)}} + \lambda_\pi \\
         &=  -\frac12 ({\Lambda_t} - {\mu})^T\Sigma^{-1}({\Lambda_t}-{\mu})  +\frac12 ({\Lambda_t} + {\mu})^T\Sigma^{-1}({\Lambda_t}+ {\mu}) + \lambda_\pi \\
         &=  {\Lambda_t}^T\Sigma^{-1} {\mu} + {\mu}^T\Sigma^{-1}{\Lambda_t} + \lambda_\pi \\
         &=  2{\mu}^T\Sigma^{-1}{\Lambda_t} + \lambda_\pi\\
    \end{split}
\end{equation}
\end{proof}

With this lemma, we can give the following result on expert $t$'s posterior log odds when no experts report rationales, i.e., in the filtration $\mathbb{F}^{(0)}$:

\begin{theorem}[Aggregate without rationales]\label{lem:aggwout}
    By Observation \ref{obs:llpriv}, when no expert submits rationales, expert $t$ observes $\Lambda_{t} = \begin{pmatrix} {\lambda}_1, \ldots, {\lambda}_{t} \end{pmatrix}$ a vector of the log likelihood ratios of the private signals observed by experts $1, \ldots, t$ where $\Lambda_{t}|Y \sim \mathcal{N}\left(\pm \tau \mathds{1}_{t}, 2\tau((1-\rho)\mathbb{I}_t + \rho\mathds{1}_t\mathds{1}_t^T) \right)$. Then, expert $t$'s Bayesian posterior log odds is $\gamma_t =\lambda_\pi +  w\sum_{t'=1}^t \lambda_{t'}$ where $w = \frac{1}{1+(t-1)\rho}$. Ex ante, expert $t$'s posterior log odds is distributed as $\gamma_t|Y \sim \mathcal{N}\left(\lambda_\pi \pm tw\tau , 2tw\tau\right)$.
\end{theorem}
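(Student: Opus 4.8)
The plan is to apply Lemma~\ref{lem:agg} directly, with $\mu = \tau\mathds{1}_t$ and $\Sigma = 2\tau\bigl((1-\rho)\mathbb{I}_t + \rho\mathds{1}_t\mathds{1}_t^T\bigr)$, so that $\gamma_t = 2\mu^T\Sigma^{-1}\Lambda_t + \lambda_\pi$. The only real work is inverting this equicorrelated covariance matrix and simplifying $\mu^T\Sigma^{-1}$, everything else being bookkeeping.

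First I would invert $A := (1-\rho)\mathbb{I}_t + \rho\mathds{1}_t\mathds{1}_t^T$ via the Sherman--Morrison formula (a rank-one update of a scaled identity), obtaining $A^{-1} = \tfrac{1}{1-\rho}\mathbb{I}_t - \tfrac{\rho}{(1-\rho)(1+(t-1)\rho)}\mathds{1}_t\mathds{1}_t^T$. The key observation is that $\mathds{1}_t$ is an eigenvector of $A$ with eigenvalue $1+(t-1)\rho$, so $\mathds{1}_t^T A^{-1} = \tfrac{1}{1+(t-1)\rho}\mathds{1}_t^T = w\,\mathds{1}_t^T$. Since $\Sigma = 2\tau A$, this gives $\mu^T\Sigma^{-1} = \tfrac{\tau}{2\tau}\mathds{1}_t^T A^{-1} = \tfrac{w}{2}\mathds{1}_t^T$, and hence $\gamma_t = 2\cdot\tfrac{w}{2}\mathds{1}_t^T\Lambda_t + \lambda_\pi = \lambda_\pi + w\sum_{t'=1}^t\lambda_{t'}$, which is the claimed posterior.

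For the ex ante law, note $\gamma_t$ is an affine function of the conditionally jointly Gaussian vector $\Lambda_t$, hence conditionally Gaussian. The conditional mean is $\mathbb{E}[\gamma_t\mid Y] = \lambda_\pi + w\cdot(\pm t\tau) = \lambda_\pi \pm tw\tau$. The conditional variance is $\operatorname{Var}(\gamma_t\mid Y) = w^2\,\mathds{1}_t^T\Sigma\mathds{1}_t = 2\tau w^2\bigl((1-\rho)t + \rho t^2\bigr) = 2\tau w^2 t\bigl(1+(t-1)\rho\bigr) = 2tw\tau$, using $1+(t-1)\rho = 1/w$. This yields $\gamma_t\mid Y \sim \mathcal{N}(\lambda_\pi \pm tw\tau,\, 2tw\tau)$.

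I expect no conceptual obstacle here: the difficulty is purely computational, namely applying Sherman--Morrison correctly and not mishandling the scalar factor $2\tau$. The equicorrelated structure makes $\mathds{1}_t$ the only relevant eigendirection and collapses the matrix algebra to scalars. An alternative that sidesteps matrix inversion is to argue by exchangeability that the optimal aggregation weights must be equal, so $\gamma_t = \lambda_\pi + w\sum_{t'}\lambda_{t'}$ for some $w$, and then pin down $w$ by a one-dimensional sufficiency argument on $\sum_{t'}\lambda_{t'}$; but invoking Lemma~\ref{lem:agg} directly is the cleanest route.
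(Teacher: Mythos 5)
Your proposal is correct and follows essentially the same route as the paper: invoke Lemma~\ref{lem:agg}, invert the equicorrelated covariance (the paper states the inverse directly where you cite Sherman--Morrison), and read off the uniform weight $w=\frac{1}{1+(t-1)\rho}$. The only cosmetic difference is that you compute the ex-ante variance as the quadratic form $w^2\mathds{1}_t^T\Sigma\mathds{1}_t$ while the paper expands it into variance and covariance terms; both give $2tw\tau$.
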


Observe that if $\rho=0$ and we had conditional independence, $\gamma = \lambda_\pi + \sum_{t'=1}^t \lambda_{t'}$, so we can just add up log posterior odds and arrive at the efficient aggregate. If $\rho =1$, then every signal is identical so $\lambda_t = \lambda_{t'}$ for all $1 \leq t' < t$, so the aggregate is naturally $\gamma_t = \lambda_\pi + \lambda_{t}$. In both cases, there is no need for rationales, since aggregating posterior log odds is already efficient. As we will show further on, it is when $0 < \rho < 1$ that rationales add value, since this is when it is helpful to disentangle overlapping pieces of information.

\subsection{Aggregation with rationales}

When every expert submits rationales, the expert at time-step $t$ has access to $\mathcal{F}_t^{(1)} = \sigma\left(X_S | S \subseteq \{1,\ldots, t\}\right)$. From Observation \ref{obs:llresi}, this means the expert can also observe the new information contributed by every expert $\{\psi_1, \ldots, \psi_{t}\}$. By Observation \ref{obs:resprop}.4, the expert's aggregate belief given all information at time-step $t$ is $\mathbb{P}(Y=1|\mathcal{F}_t^{(1)})  = \mathbb{P}(Y=1 | \psi_1, \ldots, \psi_t)$. We give the following result on expert $t$'s posterior log odds when all experts report rationales, i.e., in filtration $\mathbb{F}^{(1)}$:

\begin{theorem}[Aggregate with rationales]\label{thm:aggwith}
By Observation \ref{obs:llresi}, when every expert submits rationales, expert $t$ observes $\Psi_{t} = \begin{pmatrix} {\psi}_1, \ldots, {\psi}_{t} \end{pmatrix}$ a vector of the log likelihood ratios of the residual signals observed by experts $1, \ldots, t$ where $\Psi_{t}|Y \sim \mathcal{N}\left(\pm \tau \left(f_{\alpha, 1}, \ldots, f_{\alpha, t}\right), 2\tau \cdot \text{diag}\left(f_{\alpha, 1}, \ldots, f_{\alpha, t}\right)\right)$. Then, expert $t$'s Bayesian posterior log odds is $\gamma_t = \lambda_\pi + \sum_{t'=1}^t \psi_{t'}$. Ex ante, expert $t$'s posterior log odds is distributed as $\gamma_t|Y \sim \mathcal{N}\left(\lambda_\pi \pm \tau F_{\alpha,t}, 2\tau F_{\alpha,t}\right)$ where $F_{\alpha, 1}=1$ and for $t\geq 2$:
{\small
\begin{equation}
F_{\alpha,t} = \sum_{t'=1}^t f_{\alpha,t'} = 1 + (1-\rho)\left(\frac{1-(1-\rho)^{\alpha(t-1)}}{1-(1-\rho)^\alpha}\right)
\end{equation}
}
\end{theorem}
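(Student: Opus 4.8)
The plan is to apply Lemma~\ref{lem:agg} directly to the vector $\Psi_t$, with mean vector $\mu = \tau\,\mathbf{f}$ where $\mathbf{f} = (f_{\alpha,1},\dots,f_{\alpha,t})$ and covariance $\Sigma = 2\tau\,\text{diag}(\mathbf{f})$, exactly as given by Observation~\ref{obs:llresi}. The crucial simplification relative to the no-rationale case of Theorem~\ref{lem:aggwout} is that $\Sigma$ is now \emph{diagonal} — residual signals are composed of disjoint atoms and hence conditionally independent given $Y$ (Observation~\ref{obs:resprop}.3) — so $\Sigma^{-1} = \tfrac{1}{2\tau}\text{diag}(\mathbf{f})^{-1}$. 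Plugging into $\gamma = 2\mu^T\Sigma^{-1}\Lambda + \lambda_\pi$ from Lemma~\ref{lem:agg}, the factors of $\tau$ and the diagonal entries $f_{\alpha,t'}$ cancel exactly, leaving $\gamma_t = \mathbf{f}^T\text{diag}(\mathbf{f})^{-1}\Psi_t + \lambda_\pi = \mathds{1}^T\Psi_t + \lambda_\pi = \lambda_\pi + \sum_{t'=1}^t \psi_{t'}$. This proves the first claim and formalizes the intuition that, with rationales, each expert's novel information is simply added up with no reweighting.

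For the ex ante distribution I would again use Observation~\ref{obs:llresi}: conditional on $Y$, the $\psi_{t'}$ are independent with $\psi_{t'}\mid Y \sim \mathcal{N}(\pm f_{\alpha,t'}\tau,\, 2 f_{\alpha,t'}\tau)$. Summing independent Gaussians gives $\sum_{t'=1}^t \psi_{t'} \mid Y \sim \mathcal{N}\big(\pm \tau \sum_{t'=1}^t f_{\alpha,t'},\, 2\tau \sum_{t'=1}^t f_{\alpha,t'}\big)$, so adding the constant $\lambda_\pi$ yields $\gamma_t \mid Y \sim \mathcal{N}(\lambda_\pi \pm \tau F_{\alpha,t},\, 2\tau F_{\alpha,t})$ with $F_{\alpha,t} := \sum_{t'=1}^t f_{\alpha,t'}$. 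The fact that the same quantity $F_{\alpha,t}$ appears as both the mean shift and half the variance is the by-now-familiar signature of a Gaussian log-likelihood process, so it doubles as a consistency check rather than extra work.

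It remains to evaluate $F_{\alpha,t}$ in closed form. Separating the $t'=1$ term (where $f_{\alpha,1}=1$ by definition and does \emph{not} obey the $t\ge 2$ formula), I would write $F_{\alpha,t} = 1 + \sum_{t'=2}^t (1-\rho)^{1+(t'-2)\alpha}$ for $t\ge 2$, re-index with $k = t'-2 \in \{0,\dots,t-2\}$, and recognize a finite geometric series with ratio $(1-\rho)^\alpha$, giving $\sum_{k=0}^{t-2}(1-\rho)^{1+k\alpha} = (1-\rho)\frac{1-(1-\rho)^{\alpha(t-1)}}{1-(1-\rho)^\alpha}$, which is exactly the stated expression. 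There is no real obstacle here: once Lemma~\ref{lem:agg} is available, the diagonal covariance makes the matrix algebra trivial and the rest is a geometric sum. The only points to state carefully are the justification that $\text{Cov}(\psi_t,\psi_{t'}\mid Y)=0$ for $t\neq t'$ (inherited from Observation~\ref{obs:resprop}.3 via disjointness of atoms), the separate handling of the base case $t=1$, and the implicit assumption $0<\rho<1$ so that $(1-\rho)^\alpha\neq 1$ and the geometric denominator is nonzero — the boundary $\rho\to 1$ being the degenerate case already discussed after Theorem~\ref{lem:aggwout}.
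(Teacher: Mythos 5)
Your proposal is correct and follows essentially the same route as the paper's proof: apply Lemma~\ref{lem:agg} with the diagonal covariance from Observation~\ref{obs:llresi} so the weights cancel to give $\gamma_t = \lambda_\pi + \sum_{t'=1}^t \psi_{t'}$, then sum the independent conditional Gaussians and evaluate $F_{\alpha,t}$ as a geometric series. Your added remarks on the $t=1$ base case and the nonvanishing denominator for $0<\rho<1$ are fine but not a substantive departure.
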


\subsection{Rationales enable more efficient aggregation}

We have thus far given the ex-ante posterior log odds at every time-step $t$ given $\mathcal{F}_t^{(0)}$ and $\mathcal{F}_t^{(1)}$. Now, we show that when all experts provide rationales, the ex-ante log posterior odds of expert $t$ is further away from the log prior odds, relative to when no experts provide rationales.

\begin{theorem}[Ex-ante log posterior odds is further from log prior odds with rationales]\label{thm:eff}
    Ex-ante, the expected log posterior odds at time-step $t$ is further from log prior odds when all experts provide rationales, relative to when no experts provide rationales, unless $\rho \in \{0, 1\}$. In other words, $tw\tau \leq \tau F_{\alpha, t}$ with equality when $\rho \in \{0, 1\}$.
\end{theorem}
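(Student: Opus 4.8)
The plan is to strip away the common positive factor $\tau$ and prove the equivalent inequality $tw \le F_{\alpha,t}$, where $w = \frac{1}{1+(t-1)\rho}$. The case $t=1$ is trivial ($tw = 1 = F_{\alpha,1}$ for every $\rho$), and the two boundary values of $\rho$ are checked directly: at $\rho=0$ both sides equal $t$ (each $f_{\alpha,t'}=1$), and at $\rho=1$ both sides equal $1$ ($w=\tfrac1t$ and $f_{\alpha,t'}=0$ for $t'\ge2$). So I would fix $t\ge2$, $\rho\in(0,1)$, and set $q=1-\rho\in(0,1)$.

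Next I would reduce to the single extreme $\alpha=1$. Since $F_{\alpha,t}=\sum_{t'=1}^{t} f_{\alpha,t'}$, and the only $\alpha$-dependent summands are $f_{\alpha,t'}=(1-\rho)^{1+(t'-2)\alpha}$ for $t'\ge 3$ — a base in $(0,1)$ raised to an exponent that increases in $\alpha$ — the map $\alpha\mapsto F_{\alpha,t}$ is non-increasing, so $F_{\alpha,t}\ge F_{1,t}$ and it suffices to prove $tw\le F_{1,t}$. At $\alpha=1$ the geometric series collapses to $F_{1,t}=1+(1-\rho)\frac{1-(1-\rho)^{t-1}}{\rho}=\frac{1-q^{t}}{1-q}=1+q+\dots+q^{t-1}$, and $1+(t-1)\rho = t-(t-1)q$, so the target becomes $\frac{t}{\,t-(t-1)q\,}\le\frac{1-q^{t}}{1-q}$.

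Then I would clear denominators — both are strictly positive for $q\in(0,1)$ — expand $(t-(t-1)q)(1-q^{t})$, cancel the leading $t$-terms, and divide through by $q$. After this bookkeeping the inequality collapses to the clean statement $g(q):=tq^{t-1}-(t-1)q^{t}\le 1$. This is finished by one differentiation: $g'(q)=t(t-1)q^{t-2}(1-q)\ge 0$ on $[0,1]$, so $g$ is non-decreasing with $g(1)=t-(t-1)=1$; hence $g(q)\le 1$ on $[0,1]$, and strictly below $1$ on $[0,1)$ once $t\ge2$. Reading the chain of equivalences backward yields $tw < F_{1,t}\le F_{\alpha,t}$ for $t\ge2$, $\rho\in(0,1)$, with equality precisely at $\rho\in\{0,1\}$ (and, trivially, at $t=1$).

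The only places requiring care are bookkeeping rather than ideas: (i) the monotonicity-in-$\alpha$ reduction, which needs the $t'=1,2$ summands separated out since they do not depend on $\alpha$; and (ii) keeping signs straight when clearing denominators and dividing by $q$ (the division is valid for $q>0$, and $q=0$, i.e.\ $\rho=1$, is the boundary case handled separately). As a more conceptual alternative for the non-strict inequality, one can note that $\mathbb{E}[\gamma_t-\lambda_\pi\mid Y=1]$ is exactly the Kullback--Leibler divergence between the law of expert $t$'s observations under $Y=1$ and under $Y=0$; since $\mathcal{F}_t^{(1)}$ refines $\mathcal{F}_t^{(0)}$ (the refinement observation), monotonicity of relative entropy under coarsening immediately gives $tw\tau = \mathbb{E}^{(0)}[\gamma_t-\lambda_\pi\mid Y=1]\le\mathbb{E}^{(1)}[\gamma_t-\lambda_\pi\mid Y=1]=\tau F_{\alpha,t}$. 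I would present the algebraic argument as the main proof, since it delivers the equality cases and strictness transparently, and mention the information-theoretic reading as a remark.
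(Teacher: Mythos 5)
Your proposal is correct and follows essentially the same route as the paper: reduce to $\alpha=1$ by monotonicity of $F_{\alpha,t}$ in $\alpha$, sum the geometric series to get $F_{1,t}=\frac{1-(1-\rho)^t}{\rho}$, and then verify a scalar inequality that is in fact identical to the paper's key quantity, since your $g(q)=tq^{t-1}-(t-1)q^{t}$ equals the paper's $Q'=(1-\rho)^{t-1}\bigl(1+(t-1)\rho\bigr)$. The only difference is cosmetic — you bound $g$ by differentiating in $q$, whereas the paper differentiates $\log Q'$ in (continuous) $t$ using $\log(1-\rho)\le-\rho$ — and your variant, together with the KL/data-processing remark for the non-strict inequality, is if anything a slightly cleaner way to extract the strictness and equality cases.
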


Thus, we have established that rationales enable more efficient aggregation in expectation. While the expected log odds is easy to compute, the expected probability at a given time-step $\mathbb{E}[\sigma(\gamma_t)]$ where $\gamma_t$ is normally distributed does not have a simple closed-form expression. Instead, we provide a numerical visualization of ex-ante expected \emph{probabilities} with and without rationales, for various choices of $\rho, \alpha, \tau, \lambda_\pi$ as a function of the number of agents $t$ in Appendix \ref{app:visbel}. We find (as expected) that rationales are more valuable for smaller $\alpha$ (new experts add significant new information) and when $\rho$ is not too close to 0 or 1, while $\tau$ determines how quickly the expected belief gets stronger. The limiting behavior of the expected log posterior odds as $t \to \infty$ is tractable, which we give next. 

\begin{observation}[Limit of expected posterior without rationales] \label{obs:norat}
    Without rationales, the ex-ante expected log posterior odds converges when $\rho > 0$:
    {\small
    \begin{equation}
    \lim_{t\to \infty} \left(\lambda_\pi \pm \frac{t\tau}{1+(t-1)\rho} \right)= \lambda_\pi \pm \frac{\tau}{\rho}
    \end{equation}
    }
\end{observation}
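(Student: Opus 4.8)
The plan is to reduce this to a one-line limit of a scalar sequence by quoting Theorem~\ref{lem:aggwout}. That theorem states that when no expert submits rationales, expert $t$'s ex-ante posterior log odds satisfies $\gamma_t \mid Y \sim \mathcal{N}\!\left(\lambda_\pi \pm tw\tau,\ 2tw\tau\right)$ with $w = \frac{1}{1+(t-1)\rho}$. Hence the ex-ante expected log posterior odds at time-step $t$ is precisely $\lambda_\pi \pm tw\tau = \lambda_\pi \pm \frac{t\tau}{1+(t-1)\rho}$, which is exactly the expression inside the limit in the statement. So all that remains is to compute $\lim_{t\to\infty} \frac{t\tau}{1+(t-1)\rho}$ for $\rho > 0$.

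To evaluate this, I would divide numerator and denominator by $t$, writing
\begin{equation}
\frac{t\tau}{1+(t-1)\rho} = \frac{\tau}{\tfrac{1}{t} + \left(1-\tfrac{1}{t}\right)\rho}.
\end{equation}
Since $\rho > 0$, the denominator is bounded away from $0$ for every $t \geq 1$ and converges to $\rho$ as $t\to\infty$; by continuity of the reciprocal map away from zero, the right-hand side converges to $\tau/\rho$. Restoring the additive $\lambda_\pi$ and tracking the $\pm$ sign according to $Y$ yields the claimed limit $\lambda_\pi \pm \frac{\tau}{\rho}$.

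There is no genuine obstacle here; the only point worth flagging is the role of the hypothesis $\rho > 0$. If $\rho = 0$, the denominator does not grow, $tw\tau = t\tau \to \infty$, and the limit fails to exist (finitely); this is the conditionally-independent case in which information accumulates without bound, and it is precisely why the statement excludes it. As a consistency check one can also observe that $t \mapsto \frac{t}{1+(t-1)\rho}$ is nondecreasing in $t$ for fixed $\rho \in (0,1]$ (it is constant when $\rho=1$ and strictly increasing when $\rho<1$), so the limit $\tau/\rho$ is in fact a supremum over $t$, matching the intuition that additional experts never reduce the strength of the aggregate belief.
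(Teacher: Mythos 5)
Your proposal is correct and matches the paper's treatment: Observation~\ref{obs:norat} is stated as an immediate consequence of Theorem~\ref{lem:aggwout}, with the limit $\frac{t\tau}{1+(t-1)\rho}\to\frac{\tau}{\rho}$ for $\rho>0$ computed directly, exactly as you do by dividing through by $t$. Your added remarks on the $\rho=0$ case and monotonicity are fine consistency checks but not needed.
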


\begin{observation}[Limit of expected posterior with rationales]\label{obs:withrat}
    With rationales, the ex-ante expected log posterior odds converges when $\rho > 0$:
    {\small
    \begin{equation}
        \lim_{t\to\infty} \left[ \lambda_\pi \pm \tau\left(1 + (1-\rho)\left(\frac{1-(1-\rho)^{\alpha(t-1)}}{1-(1-\rho)^\alpha}\right)\right) \right] = \lambda_\pi \pm \tau\left(1 + \frac{1-\rho}{1-(1-\rho)^\alpha}\right)
    \end{equation}
    }
    When $\alpha = 1$, $\lim_{t \to \infty} \tau F_{\alpha, t} = \lambda_\pi \pm \frac{\tau}{\rho}$.
\end{observation}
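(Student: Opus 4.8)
The plan is to exploit the closed-form expression for $F_{\alpha,t}$ already obtained in Theorem \ref{thm:aggwith}, so that the claim reduces to a deterministic scalar limit. Recall from that theorem that, conditional on $Y$, $\gamma_t$ is Gaussian with mean $\lambda_\pi \pm \tau F_{\alpha,t}$, where for $t \geq 2$
\[
F_{\alpha,t} = 1 + (1-\rho)\,\frac{1 - (1-\rho)^{\alpha(t-1)}}{1 - (1-\rho)^\alpha}.
\]
Hence the ex-ante expected log posterior odds at step $t$ is exactly $\lambda_\pi \pm \tau F_{\alpha,t}$, and it suffices to compute $\lim_{t\to\infty} F_{\alpha,t}$.

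First I would note that since $0 < \rho \le 1$ we have $1-\rho \in [0,1)$, so $(1-\rho)^\alpha \in [0,1)$ for every $\alpha > 0$; in particular the denominator $1 - (1-\rho)^\alpha$ is strictly positive and the expression is well defined for all finite $t$. Next, because $0 \le 1-\rho < 1$ and $\alpha(t-1)\to\infty$, the only $t$-dependent term, $(1-\rho)^{\alpha(t-1)}$, tends to $0$ as $t\to\infty$ (in the degenerate case $\rho = 1$ it equals $0$ for all $t \ge 2$ outright). Passing to the limit termwise gives
\[
\lim_{t\to\infty} F_{\alpha,t} = 1 + \frac{1-\rho}{1-(1-\rho)^\alpha},
\]
which, multiplied by $\tau$ and added to $\lambda_\pi$, is the claimed limit. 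For the $\alpha = 1$ specialization I would just simplify by hand: $1 + \frac{1-\rho}{1-(1-\rho)} = 1 + \frac{1-\rho}{\rho} = \frac1\rho$, so $\tau F_{1,t}\to\frac{\tau}{\rho}$, recovering precisely the rationale-free limit of Observation \ref{obs:norat} — consistent with $\alpha = 1$ being the extreme where marginal experts' new information vanishes fastest and rationales stop helping asymptotically.

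There is essentially no hard step here; the only point worth stating carefully is the role of the hypothesis $\rho > 0$, which is exactly what makes the geometric term $(1-\rho)^{\alpha(t-1)}$ decay and the denominator $1-(1-\rho)^\alpha$ nonzero. For $\rho = 0$ one has $f_{\alpha,t} = 1$ for all $t$, the partial sums $F_{\alpha,t}$ diverge, and no finite limit exists — matching the exclusion of $\rho = 0$ in Observation \ref{obs:norat} and the earlier discussion that rationales are needed precisely when $0 < \rho < 1$.
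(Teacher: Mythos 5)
Your proposal is correct and matches the paper's (implicit) argument: the paper treats this observation as a direct consequence of the closed form for $F_{\alpha,t}$ from Theorem \ref{thm:aggwith}, with the only content being that $(1-\rho)^{\alpha(t-1)}\to 0$ when $\rho>0$ and the algebraic simplification $1+\frac{1-\rho}{\rho}=\frac{1}{\rho}$ for $\alpha=1$. Your added remarks on the $\rho=1$ and $\rho=0$ edge cases are accurate but not needed beyond what the paper states.
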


Observations \ref{obs:norat} and \ref{obs:withrat} quantify \emph{how much} better aggregation is ex-ante in the limit when all experts provide rationales, relative to when no rationales are provided. Even in the case where $\alpha=1$ and the rationales eventually provide only as much information as aggregates without rationales, rationales provide this faster (by Theorem \ref{thm:eff}). This is enabled by the fact that experts can efficiently discard previously revealed information, and focus on aggregating $(W_t)_{t\geq 1}$ without fear of `double-counting' information.

\section{Deliberation Mechanism}\label{sec:mechanism}

In the previous sections, we developed a model of reasoning with rationales that showed that rationales allow for more efficient aggregation. Now, we tackle the problem of providing incentives to elicit true rationales and beliefs, since providing rationales is typically a costly enterprise. We design a mechanism that a principal can use to \emph{elicit rationales from experts} and \emph{efficiently aggregate this information} to arrive at more accurate beliefs. We give our proposed mechanism $\mathcal{M}_{deliberation}$ below:

\begin{enumerate}
\item Experts $1, 2, \ldots$ arrive sequentially one at a time until the mechanism terminates at time-step $T$. Each expert makes a report $\kappa_t$ to the supervisor.
\item At time-step $t$, expert $t$ joins the mechanism with private information and observes the full history of reports $\kappa_1, \ldots, \kappa_{t-1}$ submitted to the supervisor as well as the supervisor's reported trajectory of beliefs $\hat{\gamma}_1^{S}, \ldots, \hat{\gamma}_{t-1}^{S}$. Expert $t$ then arrives at their posterior log odds $\gamma_t$.
\item Expert $t$ then chooses whether to exert effort ($e_t=0$ or $e_t=1$) to submit a rationale $\theta_t$, and whether to misreport their belief $\gamma_t$ and/or rationale $\theta_t$ (if applicable). The expert's report to the supervisor $\kappa_t$ consists of their belief, and if effort was exerted, their rationale.
\item The supervisor observes report $\kappa_t$ and updates their belief to $\gamma_t^{S}$. They then (strategically) report $\hat{\gamma}_t^{S}$ to the principal.
\item Upon realization of event $Y$, the principal determines payoffs for the supervisor and experts.
\end{enumerate}

\subsection{Background}

Before further analysis, we provide some key definitions. 
\begin{definition}[Scoring Rules]
A scoring rule is a function $S: Y \times \Delta_Y \rightarrow \mathbb{R}$ that scores a probabilistic prediction ${\bf p} \in \Delta_Y$ against an outcome $Y=y$. A scoring rule is \textit{proper} if $\mathbb{E}_{y \sim {\bf p}}[S(y, {\bf p})] \geq \mathbb{E}_{y \sim {\bf p}}[S(y, {\bf q})]$ for any ${\bf q} \neq {\bf p}, {\bf q} \in \Delta_Y$. The scoring rule is \textit{strictly proper} if the inequality is strict.
\end{definition}

\begin{definition}[Market Scoring Rules]
    An $S-$market scoring rule is a sequential and shared proper scoring rule that scores agent $t$ based on the difference in score between their report ${\bf p}_t$ and the prior report ${\bf p}_{t-1}$, i.e., $S_{MSR}(y, {\bf p}_t, {\bf p}_{t-1}) = S(y, {\bf p}_t) - S(y, {\bf p}_{t-1}) $ where $S$ is the given scoring rule. 
\end{definition}

\begin{remark}[Log Scoring Rule]
    The scoring rule $S(Y; {\bf p}) = \log(p_y)$ is a strictly proper scoring rule.
\end{remark}

\begin{definition}[Perfect Bayesian Equilibrium]
    A Perfect Bayesian Equilibrium (PBE) is a solution concept for sequential Bayesian games with incomplete information. It consists of a strategy profile—specifying each agent’s action at every possible information set—and a belief profile—assigning to each information set a belief updated by Bayes’ rule when applicable. In a PBE, no agent can improve their expected payoff by unilaterally deviating, and off-the-equilibrium-path beliefs (formed after unexpected moves) are required to be consistent with the equilibrium strategies. 
\end{definition}

\subsection{The Supervisor's Game}

Here, we lay out the game from the supervisor's perspective.

\paragraph{Reports and Strategy} At every time-step $t$, the supervisor employs strategy $s: \mathbb{R} \rightarrow \mathbb{R}$ to strategically report their true log odds belief $\gamma_t^S$ to the principal, i.e., the supervisor reports $\hat{\gamma}_t^S = s(\gamma_t^S)$. The supervisor's true belief $\gamma_t^S$ at time-step $t$ is a function of all expert reports $\kappa_1, \ldots, \kappa_t$ seen thus far.

\paragraph{Payoffs for Supervisor} When the mechanism terminates at time-step $T$, the principal will reward the supervisor's report $\hat{\gamma}_t^S$ at each time-step using the \emph{log scoring rule}, with the log score of the prior as the constant offset. Thus, the  supervisor is rewarded at every time-step based on improving the prior towards the true outcome $Y=y$:
{\small
\begin{equation}\label{eq:suprew}
    R_S\left(y, \{ \hat{\gamma}_{t}^{S} \}_{1:T} \right) = \sum_{t=1}^T  y\left[\log\left( \sigma(\hat{\gamma}_{t}^{S}) \right) - \log(\sigma(\pi))\right] + (1-y)\left[\log\left( \sigma(-\hat{\gamma}_{t}^{S}) \right) - \log(\sigma(-\pi))\right]
\end{equation}
}
where $\sigma(\cdot)$ is the logistic sigmoid, which transforms the supervisor's log odds belief to a probability that $Y=1$.

\paragraph{Utility} We model the supervisor's utility as simply their payoff from the principal when outcome $Y=y$:

\begin{equation}
    U_S\left(y; \{\hat{\gamma}_t^{S}\}_{1:T}\right) =  R_S\left(y; \{\hat{\gamma}_t^{S}\}_{1:T}\right)
\end{equation}

\subsection{The Experts' Game}\label{sec:exp}

Now, we consider the game from experts' perspective. 

\paragraph{Reports and Strategy} Expert $t$ makes two strategic choices: whether or not to exert effort $e_t=0$ or $e_t=1$ to submit the rationale, and how to report their true belief ${\gamma}_t$ and (if applicable) rationale ${\theta}_t$. Concretely, expert $t$ employs strategy $s_t: \mathbb{R} \times \mathcal{X} \rightarrow \mathbb{R} \times \mathcal{X} \times \{0, 1\}$ to strategically report their log odds belief $\gamma_t$ and (if applicable) rationale $\theta_t$ to the principal, i.e., the expert reports $\kappa_t = s_t(\gamma_t, \theta_t)$ where $\kappa_t = (\hat{\gamma}_t, \emptyset, 0)$ when $e_t=0$ and $\kappa_t = (\hat{\gamma}_t, \hat{\theta}_t, 1)$ when $e_t =1$. Experts may choose to expend effort and provide false, misleading, or incomplete information in the rationale $\hat{\theta}$. Such strategic misreporting of the rationale and belief is reflected in the decomposition $\hat{\gamma}_t = \hat{\psi}_t + \hat{\phi}_t$.\footnote{Since rationales induce filtration $\mathbb{F}^{(1)}$, the rationale $\theta_t$ fully specifies the belief $\gamma_t$. Thus, reported rationales $\hat{\theta}_t$ should be \emph{consistent} with reported beliefs $\hat{\gamma}_t$, in the sense that $\hat{\gamma}_t$ should reflect the belief held by someone with rationale $\hat{\theta}_t$. The expert does technically have the freedom to report an inconsistent rationale, but in our equilibrium analysis we say experts ignore such inconsistent reports.} Additionally, we note that our construction of the filtration $\mathbb{F}^{(1)}$ relied on experts reporting the log likelihood ratios of their observed signal $\lambda_t$, but our mechanism is collecting log posterior odds $\hat{\gamma}_t$. Happily, every expert can recover other experts' implied log likelihood ratios, preserving the applicable filtrations $\mathbb{F}^{(0)}$ and $\mathbb{F}^{(1)}$.

\begin{observation}
    If expert $t$ reports only their posterior log odds $\hat{\gamma}_t$ and no rationales, it is possible to infer the log likelihood ratio $\hat{\lambda}_t$ of their signal via the recurrence relation $\hat{\lambda}_{t} = (\hat{\gamma}_t - \lambda_\pi)(1+(t-1)\rho) - \sum_{t'=1}^{t-1} \hat{\lambda}_{t'}$. If  experts report both posterior log odds $\hat{\gamma}_t$ and rationales $\hat{\theta}_t$, it is possible to infer the log likelihood ratio of the residual signal $\psi_t = \hat{\gamma}_t  - \lambda_\pi - \sum_{t'=1}^{t-1} \psi_{t'}$. Thus, observing experts' posterior beliefs is equivalent to observing their log likelihood ratios when rationales are absent, or their residual signals when rationales are present.
\end{observation}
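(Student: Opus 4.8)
The plan is to invert the two aggregation maps already established — the one in Theorem \ref{lem:aggwout} for the no-rationale filtration and the one in Theorem \ref{thm:aggwith} for the all-rationale filtration — by observing that each expresses the vector of reported posteriors $(\hat\gamma_1,\dots,\hat\gamma_t)$ as an \emph{invertible, lower-triangular affine transformation} of the vector of underlying signal log likelihood ratios, so recovering the latter is just a matter of solving a triangular recurrence.

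Concretely, for the no-rationale case, suppose a supervisor's history of reports induces log likelihood ratios $\hat\lambda_1,\dots,\hat\lambda_t$; by Theorem \ref{lem:aggwout} the induced posterior is $\hat\gamma_t = \lambda_\pi + \frac{1}{1+(t-1)\rho}\sum_{t'=1}^t \hat\lambda_{t'}$. Multiplying by $1+(t-1)\rho$ and isolating the last summand gives $\hat\lambda_t = (\hat\gamma_t - \lambda_\pi)(1+(t-1)\rho) - \sum_{t'=1}^{t-1}\hat\lambda_{t'}$, which is exactly the claimed recurrence; it is well-posed since its right-hand side at step $t$ uses only $\hat\gamma_t$ together with $\hat\lambda_1,\dots,\hat\lambda_{t-1}$ recovered at earlier steps, with base case $\hat\lambda_1 = \hat\gamma_1 - \lambda_\pi$. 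For the all-rationale case, Theorem \ref{thm:aggwith} gives $\hat\gamma_t = \lambda_\pi + \sum_{t'=1}^t \psi_{t'}$; differencing consecutive posteriors (equivalently, isolating the last term) immediately yields $\psi_t = \hat\gamma_t - \lambda_\pi - \sum_{t'=1}^{t-1}\psi_{t'}$, again with base case $\psi_1 = \hat\gamma_1 - \lambda_\pi$. To get the final ``equivalent to observing'' clause, note that in both cases the forward map from $(\hat\lambda_{t'})$ (resp. $(\psi_{t'})$) to $(\hat\gamma_{t'})$ is affine with a lower-triangular linear part whose diagonal entries are $\frac{1}{1+(t'-1)\rho}$ (resp. $1$), all nonzero, hence a bijection; therefore $\sigma(\hat\gamma_1,\dots,\hat\gamma_t)=\sigma(\hat\lambda_1,\dots,\hat\lambda_t)$, which on the truthful path equals $\sigma(\lambda_1,\dots,\lambda_t)=\mathcal{F}_t^{(0)}$, and likewise $\sigma(\hat\gamma_1,\dots,\hat\gamma_t)=\sigma(\psi_1,\dots,\psi_t)=\mathcal{F}_t^{(1)}$ (using Observation \ref{obs:resprop}.4) in the rationale case.

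The calculations are routine; the only point needing care is that reports are strategic, so the statement should be read as: the recurrences \emph{define} implied likelihood ratios from \emph{any} reported sequence, and these coincide with the true $\lambda_t$ or $\psi_t$ when reports are on the (truthful) equilibrium path. Relatedly, one should remark that each expert can actually run this recursion, since every agent's belief along the observed history is itself a deterministic function of prior reports via the same aggregation formula applied to the recovered ratios — which is what makes $\mathbb{F}^{(0)}$ and $\mathbb{F}^{(1)}$ well-defined in terms of the mechanism's messages rather than the raw signals. I do not anticipate any genuine obstacle here beyond this bookkeeping about strategic versus truthful reports.
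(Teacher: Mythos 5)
Your proposal is correct and follows exactly the route the paper intends: the paper states this observation without a separate proof, treating it as an immediate inversion of the aggregation formulas in Theorem~\ref{lem:aggwout} and Theorem~\ref{thm:aggwith}, which is precisely what you do (solve the lower-triangular recurrence, base cases $\hat{\lambda}_1 = \psi_1 = \hat{\gamma}_1 - \lambda_\pi$). Your added remarks on the bijectivity of the triangular map and on the distinction between implied ratios from arbitrary reports versus the true $\lambda_t,\psi_t$ on the truthful path are consistent with the paper's usage and require no changes.
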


\paragraph{Payoffs for Experts} The principal rewards experts using the \emph{market scoring rule} based on the log score. Specifically, the principal rewards the expert based on the \emph{supervisor's} update.

{\small
\begin{equation}
    R_t(y, \kappa_t) = a_t \cdot \left(y\left[\log\left( \sigma(\hat{\gamma}_{t}^{S}) \right) - \log(\sigma(\hat{\gamma}_{t-1}^S))\right] + (1-y)\left[\log\left( \sigma(-\hat{\gamma}_{t}^{S}) \right) - \log(\sigma(-\hat{\gamma}_{t-1}^S))\right]\right)
\end{equation}
}
where $a_t$ is a `liquidity' parameter that scales the reward and $\sigma(\cdot)$ is the logistic sigmoid which transforms the supervisor's log odds belief to the probability that $Y=1$. In other words, expert $t$'s reward is determined based on \emph{how much they improved the supervisor's reported belief, relative to the supervisor's reported belief at $t-1$}.

\paragraph{Utility} We assume each expert has cost $c$ for exerting effort $e_t=1$; this is the cost for \emph{submitting the rationale} to the supervisor. Thus, expert $t$'s utility function given outcome $Y=y$ is:
\begin{equation}
    U_t(y, \kappa_t) = R_t(y, \kappa_t) - ce_t
\end{equation}

\subsection{Analyzing Incentives}

With the above setup, we can analyze the supervisor's and experts' incentives. We will proceed as follows: (1) we show that the supervisor strictly maximizes their utility by reporting their true beliefs to the principal at each time-step, i.e., $\hat{\gamma}_t^S = \gamma_t^S$; (2) we show that if the supervisor believes experts' reports $\kappa_1, \ldots, \kappa_t$ to be truthful (regardless of effort), the supervisor's true belief $\gamma_t^S$ at time-step $t$ will simply mirror expert $t$'s reported belief $\hat{\gamma}_t$; (3) when experts and supervisor are truthful, the supervisor's ex-ante \emph{utility} is greater if all experts report rationales, relative to when no experts report rationales; (4) this incentivizes the supervisor to \emph{commit} to a strategy of not updating their report at time-step $t$ if expert $t$ does not provide a rationale, i.e., $\hat{\gamma}_t^S = \hat{\gamma}_{t-1}^S$ when $e_t = 0$; and finally (5) when the supervisor commits to this strategy and principal scales the log score appropriately, it is a perfect Bayesian equilibrium for all experts to exert effort and submit both their belief and rationale, for the supervisor to report their true beliefs, and for all agents to believe all other agents are truthful, including off the equilibrium path. We begin with the first claim:

\begin{theorem}[Truthful reporting is strictly dominant strategy for supervisor]
    The supervisor's utility is strictly maximized when they report their own beliefs truthfully at each time-step, i.e., $\hat{\gamma}_t^{S} = \gamma_t^{S}$.
\end{theorem}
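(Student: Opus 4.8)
The plan is to exploit the separability of the supervisor's reward across time-steps together with the strict properness of the log scoring rule. Observe from Equation \eqref{eq:suprew} that $R_S(y, \{\hat{\gamma}_t^S\}_{1:T}) = \sum_{t=1}^T g_t(y, \hat{\gamma}_t^S)$, where $g_t(y, \hat{\gamma}_t^S)$ is the log score of the report $\hat{\gamma}_t^S$ against outcome $y$ minus the (constant, report-independent) log score of the prior. Since the supervisor's utility equals this reward and the term at time-step $t$ depends only on the single report $\hat{\gamma}_t^S$, the supervisor can optimize each report independently; there is no inter-temporal coupling through the reward, and the supervisor's belief $\gamma_t^S$ is determined by the expert reports, which the supervisor does not influence.

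First I would fix an arbitrary time-step $t$ and condition on the supervisor's true belief $\gamma_t^S$, i.e. on the information $\kappa_1, \ldots, \kappa_t$ received so far. The supervisor chooses $\hat{\gamma}_t^S = s(\gamma_t^S)$ to maximize the conditional expectation $\mathbb{E}[g_t(Y, \hat{\gamma}_t^S) \mid \gamma_t^S]$. Writing $q = \sigma(\gamma_t^S)$ for the supervisor's true posterior probability that $Y=1$ and $p = \sigma(\hat{\gamma}_t^S)$ for the reported probability, this expectation is $q\log p + (1-q)\log(1-p)$ minus a constant. The next step is to invoke the strict properness of the log scoring rule (stated in the Remark on the Log Scoring Rule): this quantity is strictly maximized at $p = q$, i.e. at $\hat{\gamma}_t^S = \gamma_t^S$, and the maximizer is unique. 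Hence for every realization of $\gamma_t^S$ and every $t$, reporting truthfully strictly dominates any other report, so it strictly maximizes the sum as well.

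The one subtlety to handle carefully — and the main thing to get right — is the justification that the supervisor's reports at different time-steps can genuinely be optimized independently, and in particular that a strategic report $\hat{\gamma}_t^S \ne \gamma_t^S$ cannot help the supervisor at later time-steps. This holds because (i) the supervisor has no private information and makes no further moves that feed back into their own future beliefs — $\gamma_{t'}^S$ for $t' > t$ is a function of expert reports $\kappa_1, \ldots, \kappa_{t'}$ only — and (ii) experts' reports $\kappa_{t'}$ are, in the supervisor's game, treated as exogenous (the experts condition on the supervisor's \emph{reported} trajectory, but the supervisor takes experts' strategies as fixed when computing their own best response). Once this independence is noted, the argument reduces to a term-by-term application of strict properness, and truthful reporting $\hat{\gamma}_t^S = \gamma_t^S$ emerges as the strictly dominant strategy. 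I would close by remarking that this conclusion is robust: it does not depend on whether experts are truthful or on whether they supply rationales, since strict properness of the per-period score holds pointwise in $\gamma_t^S$.
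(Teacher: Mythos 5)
Your proposal is correct and follows essentially the same route as the paper's proof: the supervisor's utility is a sum of per-time-step payoffs, each an affine transformation of the strictly proper log score, so truthful reporting $\hat{\gamma}_t^{S} = \gamma_t^{S}$ strictly maximizes each term and hence the total. Your additional discussion of why the time-steps decouple (the supervisor's future beliefs depend only on expert reports) is a useful elaboration of a point the paper leaves implicit, but it does not change the argument.
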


\begin{proof}
    The supervisor's total utility is the sum of payoffs at every every time-step; the payoff at every time-step is a linear transformation of the log scoring rule, which is strictly proper. Thus, the supervisor strictly maximizes their per-time-step utility as well as total utility by reporting their true belief, i.e., $\hat{\gamma}_t^S = \gamma_t^S$ for all $1 \leq t \leq T$. 
\end{proof}

This result shows us that supervisors are incentivized to report their true beliefs. But what do supervisors believe? We answer this next:

\begin{theorem}[Supervisor's true belief mimics truthful experts]\label{lem:mimic}
    If all expert reports $\kappa_1, \ldots, \kappa_t$ are truthful, then the supervisor's true belief at time-step $t$ mimics expert $t$'s reported belief, i.e., $\gamma_t^S = \hat{\gamma}_t = \gamma_t$.
\end{theorem}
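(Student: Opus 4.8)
The plan is to prove the statement by showing that, once expert $t$'s truthful report $\kappa_t$ has been submitted, the supervisor's information is informationally equivalent to expert $t$'s information for the purpose of forming a posterior on $Y$. Since the supervisor has no private information, their time-$t$ belief is the log odds of $\mathbb{P}(Y=1 \mid \sigma(\kappa_1, \ldots, \kappa_t))$; if I can show $\sigma(\kappa_1, \ldots, \kappa_t)$ is sufficient for $Y$ relative to expert $t$'s $\sigma$-algebra $\mathcal{F}_t$ (the filtration induced by which of experts $1, \ldots, t$ exerted effort, per Definition 1), then both parties, being Bayesian and sharing the common prior $\pi$, must hold the same log odds, i.e. $\gamma_t^S = \gamma_t$. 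The remaining equality $\hat\gamma_t = \gamma_t$ is then just the truthfulness hypothesis.

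Concretely I would argue by induction on $t$. For the base case $t=1$: expert $1$ observes no prior reports, so $\gamma_1 = \lambda_\pi + \lambda_1$ (their own rationale does not refine their own belief, by the stated modeling properties), and from $\kappa_1$ the supervisor recovers $\hat\lambda_1 = \hat\gamma_1 - \lambda_\pi$ via the Observation just before this theorem; truthfulness gives $\hat\lambda_1 = \lambda_1$, so $\gamma_1^S = \lambda_\pi + \lambda_1 = \gamma_1$. For the inductive step, assume the supervisor's time-$(t-1)$ information is sufficient for $Y$ and reconstructs the posterior associated with $\mathcal{F}_{t-1}$ — which is exactly what expert $t$ obtains by observing the history $\kappa_1, \ldots, \kappa_{t-1}$ (and, under supervisor truthfulness, the reported trajectory $\hat\gamma_1^S, \ldots, \hat\gamma_{t-1}^S$ is a deterministic function of that history, so it adds nothing). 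Expert $t$ then forms $\gamma_t$, the log odds of $\mathbb{P}(Y=1 \mid \mathcal{F}_t)$, and reports it (with the rationale if $e_t=1$). Applying the Observation just before this theorem, from $\kappa_t$ the supervisor infers $\hat\lambda_t$ (if $e_t=0$) or $\hat\psi_t$ (if $e_t=1$); truthfulness makes these the true $\lambda_t$ or $\psi_t$, and adjoining them to the supervisor's time-$(t-1)$ information yields a $\sigma$-algebra that, by the appropriate sufficiency statement (Observation \ref{obs:resprop}.4 when rationales are present; the fact that $\lambda_1, \ldots, \lambda_t$ generate $\mathcal{F}_t^{(0)}$ when they are absent), produces the same posterior on $Y$ as $\mathcal{F}_t$. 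Hence $\gamma_t^S = \gamma_t = \hat\gamma_t$.

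I expect the main obstacle to be the general mixed-effort case, where some of experts $1, \ldots, t$ submit rationales and others do not, so that the operative $\sigma$-algebra is the general $\mathcal{F}_t$ of Definition 1 rather than one of the two pure filtrations $\mathbb{F}^{(0)}$ or $\mathbb{F}^{(1)}$ for which the preceding Observation and Observation \ref{obs:resprop}.4 are stated. Handling this cleanly requires (i) extending the inversion ``reported belief $\mapsto$ log-likelihood or residual log-likelihood'' to an arbitrary effort pattern, and (ii) invoking the modeling convention that a truthfully reported rationale $\hat\theta_{t'} = \theta_{t'}$ indeed exposes the finer atomic structure Definition 1 attributes to it, so that the reports $\kappa_1, \ldots, \kappa_t$ jointly generate — and in particular are sufficient for $Y$ relative to — exactly $\mathcal{F}_t$. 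Everything else is bookkeeping: the supervisor's lack of private information, the Bayesian-rationality of both parties, and the common prior do the rest.
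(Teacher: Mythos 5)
Your proposal is correct in substance, but it takes a noticeably heavier route than the paper, and the extra machinery is what creates the ``obstacle'' you flag at the end. The paper's proof is three sentences: expert $t$ observes all previous truthful reports and their own private information, so $\gamma_t$ is already the correct Bayesian aggregate of everything revealed so far; the supervisor has \emph{no} private information, so their information set is coarser than expert $t$'s and contains expert $t$'s truthfully reported posterior $\hat\gamma_t=\gamma_t$; hence (by iterated expectations, or simply ``no basis to diverge'') $\gamma_t^S=\hat\gamma_t=\gamma_t$. Note that this argument never needs the supervisor to \emph{reconstruct} the underlying $\lambda_{t'}$'s or $\psi_{t'}$'s, which is exactly the step your induction insists on. Your stronger sufficiency claim --- that $\sigma(\kappa_1,\ldots,\kappa_t)$ reproduces the posterior associated with the full $\mathcal{F}_t$ --- is fine in the two pure filtrations $\mathbb{F}^{(0)}$ and $\mathbb{F}^{(1)}$ where the inversion Observation applies, but in the mixed-effort case the paper never parameterizes the joint distribution needed to extend that inversion, so carrying out your step (i) would require model specification the paper does not supply. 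The theorem does not need it: because the expert's reported belief is itself measurable with respect to the supervisor's information and is the posterior under a finer $\sigma$-algebra, the tower property closes the argument for \emph{any} effort pattern. So your proof is right where it overlaps the paper's, but you should drop the reconstruction/sufficiency induction (or state it only as a remark for the pure cases) and let the coarser-information argument do the work; that is both what the paper does and what makes the mixed-effort case a non-issue.
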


\begin{proof}
    Since expert $t$ can observe all previous reports $\kappa_1, \ldots, \kappa_{t-1}$ and these reports are truthful, expert $t$'s updated belief $\gamma_t$ is the correct Bayesian aggregate of all revealed information thus far (regardless of whether or not rationales are revealed). The supervisor has no private information, and thus no basis to diverge from expert $t$'s belief. Since expert $t$ also reports their updated belief truthfully as $\hat{\gamma}_t$, the supervisor's updated belief at time-step $t$ simply mimics this, and $\gamma_t^S = \hat{\gamma}_t = \gamma_t$.
\end{proof}

We have established that the supervisor's true belief mimics experts' reported beliefs unless they have reason to disbelieve experts. What do experts actually report? In particular, do experts naturally report rationales to aid other experts' aggregation? The answer is no:

\begin{theorem}[Experts have no incentive to submit rationales]
    If cost of effort $c > 0$ and the expert believes the supervisor will report their true belief $\gamma_t^S$ to the principal, the expert's utility is strictly maximized by $e_t = 0$, i.e., not submitting the rationale.
\end{theorem}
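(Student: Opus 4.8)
The plan is to argue that the rationale is payoff-irrelevant for expert $t$: holding the reported belief $\hat{\gamma}_t$ fixed, attaching a rationale does not change expert $t$'s reward but strictly increases cost. First I would note that from expert $t$'s vantage point at time-step $t$ the supervisor's previous report $\hat{\gamma}_{t-1}^S$ is already pinned down by history, so expert $t$'s market-scoring reward $R_t(y,\kappa_t)$ is, viewed as a function of the expert's current action, a function of the supervisor's updated report $\hat{\gamma}_t^S$ alone (through the log-score increment).

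Next I would pin down $\hat{\gamma}_t^S$ as a function of the expert's report. By hypothesis the expert believes the supervisor reports truthfully, so $\hat{\gamma}_t^S = \gamma_t^S$; and since the supervisor holds no private information, their posterior after observing $\kappa_t$ cannot move away from the (believed-truthful) reported belief $\hat{\gamma}_t$ --- this is essentially the content of Theorem~\ref{lem:mimic}. Crucially, the finer decomposition $\hat{\gamma}_t = \hat{\psi}_t + \hat{\phi}_t$ that a rationale exposes is redundant for the supervisor's \emph{own} posterior (mirroring the model property that a rationale confers no private informational benefit): it only changes what \emph{future} experts can infer, and that in turn affects only future time-steps' market-scoring increments, none of which accrue to expert $t$. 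Hence $\hat{\gamma}_t^S$, and therefore $R_t$, depends on $\kappa_t$ only through the reported belief $\hat{\gamma}_t$, not through the effort indicator $e_t$ or the content of $\hat{\theta}_t$.

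The conclusion is then immediate: fixing any reported belief $\hat{\gamma}_t$, we have $U_t(y,\kappa_t) = R_t(y,\hat{\gamma}_t) - c\,e_t$, which is strictly larger when $e_t=0$ than when $e_t=1$ since $c>0$. Because this strict inequality holds for every choice of reported belief, every utility-maximizing report must set $e_t=0$, so submitting a rationale is strictly suboptimal. The only substantive step --- and hence the place to be careful --- is establishing that the supervisor's belief update is insensitive to whether a rationale is attached; this rests entirely on the supervisor having no private information (there is nothing in the rationale for the supervisor to act on beyond the already-reported belief) together with the fact that expert $t$'s payoff is the single-step increment $R_t$, so it is unaffected by how later experts exploit the rationale. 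Note also that the argument never assumes expert $t$ reports their belief truthfully, so it correctly identifies $e_t=0$ as part of \emph{any} best response, which is what later motivates the supervisor's commitment device.
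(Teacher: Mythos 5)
Your proposal is correct and follows essentially the same route as the paper: invoke the mimicking result (Theorem~\ref{lem:mimic}) to conclude that the supervisor's reported belief, and hence expert $t$'s single-step market-scoring reward, depends on $\kappa_t$ only through the reported belief and not through $e_t$, so with $c>0$ setting $e_t=0$ strictly dominates. Your added observations---that $\hat{\gamma}_{t-1}^S$ is fixed by history, that the rationale only benefits future time-steps whose increments do not accrue to expert $t$, and that the argument holds for any (possibly untruthful) reported belief---are consistent elaborations of the paper's shorter argument rather than a different approach.
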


\begin{proof}
    By Theorem \ref{lem:mimic}, the supervisor's true belief will simply mimic and report the expert's reported belief $\gamma_t^S = \hat{\gamma}_t$ regardless of whether expert $t$ provides a rationale (whether $e_t=0$ or $e_t=1$). Since there is a positive cost to submitting the rationale but the impact on the supervisor's belief and hence their own payoff is the same in both $e_t=0$ and $e_t=1$, the expert is strictly better off not submitting the rationale.
\end{proof}

Thus far, it appears that experts default to not submitting rationales, and the supervisor simply mimics the experts' reported belief and reports this truthfully to the principal at each time-step. What then is the role of the supervisor? Well, the supervisor's actual expected utility depends on the quality of the supervisor's beliefs, which in turn just depends on the quality of the experts' reports. We saw in Theorem \ref{thm:eff} that the ex-ante log posterior odds is greater in expectation when experts provide rationales; we now show that this translates into ex-ante \emph{higher expected score} at time-step $t$. Then, we can show that even though experts individually have no incentive to provide rationales, the fact that they collectively make the supervisor better off means the supervisor has incentive to coordinate experts to provide rationales. To arrive at this conclusion, we begin with the following result:

\begin{theorem}[Supervisor's Ex-Ante Utility is Strictly Greater with Rationales]\label{thm:great}
    Suppose the supervisor and all experts report their beliefs truthfully. Then, the supervisor's expected utility at every time-step is strictly greater when all experts report rationales, relative to when no experts report rationales, when $0 < \rho < 1$.
\end{theorem}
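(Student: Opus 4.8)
The plan is to reduce the claim to a comparison of two expected log-scores at each time-step $t$, and then to a comparison of the two quantities $tw\tau$ and $\tau F_{\alpha,t}$ that Theorem~\ref{thm:eff} already orders. By Theorem~\ref{lem:mimic}, when everyone is truthful the supervisor's reported belief at time-step $t$ equals $\gamma_t$, so the supervisor's per-step payoff (Eq.~\eqref{eq:suprew}) is, up to the prior offset, the expected log-score of the forecast $\sigma(\gamma_t)$ against $Y$. Hence it suffices to show that $\mathbb{E}[\,Y\log\sigma(\gamma_t) + (1-Y)\log\sigma(-\gamma_t)\,]$ is strictly larger under $\mathbb{F}^{(1)}$ than under $\mathbb{F}^{(0)}$ whenever $0<\rho<1$. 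I would first record the standard fact that the expected log-score equals $-H(Y\mid \mathcal{F}_t)$ (negative conditional entropy), which is monotone in the information content of the conditioning $\sigma$-algebra: since $\mathcal{F}_t^{(0)}\subseteq\mathcal{F}_t^{(1)}$ (the Observation following the rationale definition), we get $-H(Y\mid\mathcal{F}_t^{(1)})\ge -H(Y\mid\mathcal{F}_t^{(0)})$ immediately, with strictness exactly when the extra information is not $Y$-null.

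The cleaner route, which also pins down strictness, is to compute both expected scores explicitly using the Gaussian structure. Under $\mathbb{F}^{(0)}$, Theorem~\ref{lem:aggwout} gives $\gamma_t\mid Y \sim \mathcal{N}(\lambda_\pi \pm tw\tau,\ 2tw\tau)$; under $\mathbb{F}^{(1)}$, Theorem~\ref{thm:aggwith} gives $\gamma_t\mid Y \sim \mathcal{N}(\lambda_\pi \pm \tau F_{\alpha,t},\ 2\tau F_{\alpha,t})$. In both cases the posterior log-odds has the special "calibrated Gaussian" form $\mathcal{N}(\lambda_\pi \pm m,\ 2m)$ for a single scalar $m$ (here $m = tw\tau$ or $m=\tau F_{\alpha,t}$). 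So the proof reduces to showing that the map
\[
g(m) \;=\; \mathbb{E}\bigl[\,Y\log\sigma(\gamma) + (1-Y)\log\sigma(-\gamma)\,\bigr], \qquad \gamma\mid Y\sim\mathcal{N}(\lambda_\pi\pm m,\ 2m),
\]
is strictly increasing in $m>0$. Then, since Theorem~\ref{thm:eff} gives $tw\tau < \tau F_{\alpha,t}$ strictly when $0<\rho<1$, strict monotonicity of $g$ finishes the argument (and the prior-offset constant in Eq.~\eqref{eq:suprew} is the same on both sides, so it cancels).

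To establish that $g$ is strictly increasing, I would use the information-theoretic identity $g(m) = \lambda_\pi\text{-independent constant} - H(Y\mid\gamma)$: the expected log-score of the Bayes-optimal forecast is $-H(Y\mid\gamma)$, and $H(Y\mid\gamma)$ is strictly decreasing as $m$ grows because a larger $m$ corresponds to a strictly more informative signal (e.g., $\gamma$ with parameter $m'$ is a garbling of $\gamma$ with parameter $m>m'$ when both have the calibrated form, since one can add independent noise of the right variance — a standard Blackwell-ordering argument for Gaussian location families). Alternatively one can differentiate: writing the expected score as an integral over a single Gaussian and using $\frac{d}{dx}\log\sigma(x) = \sigma(-x)$, one obtains $g'(m)>0$ after an integration by parts, with the derivative vanishing only in the degenerate limits. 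The main obstacle is this last monotonicity lemma: it is "obvious" but requires care, because both the mean and the variance of $\gamma$ move with $m$ simultaneously, so a naive argument that fixes one and varies the other does not apply — the right framing is either the Blackwell/garbling view or the observation that $g(m) = -H(Y) + I(Y;\gamma)$ with mutual information monotone along this one-parameter family of experiments. Everything else (cancellation of the prior offset, invoking Theorems~\ref{lem:mimic}, \ref{lem:aggwout}, \ref{thm:aggwith}, \ref{thm:eff}, and summing over $t$) is bookkeeping.
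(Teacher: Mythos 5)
Your proposal is correct, and its skeleton matches the paper's: via Theorem~\ref{lem:mimic} the truthful supervisor's report is $\gamma_t$, whose distribution is the calibrated Gaussian $\mathcal{N}(\lambda_\pi\pm m,\,2m)$ with $m=tw\tau$ (Theorem~\ref{lem:aggwout}) or $m=\tau F_{\alpha,t}$ (Theorem~\ref{thm:aggwith}); the prior offset in Eq.~\eqref{eq:suprew} cancels; so everything reduces to showing the expected log score is strictly increasing in $m$ and then invoking Theorem~\ref{thm:eff}. Where you diverge is the monotonicity lemma. The paper proves it analytically: reparameterize $\gamma_t=\lambda_\pi+\xi+\sqrt{2\xi}Z$, differentiate under the expectation (dominated convergence), and apply Stein's lemma, so the derivative becomes $\mathbb{E}_Z\left[f'+f''\right]=\mathbb{E}_Z\left[\left(1+\exp(\lambda_\pi+\xi+\sqrt{2\xi}Z)\right)^{-2}\right]>0$; your second route (differentiate, use $\tfrac{d}{dx}\log\sigma(x)=\sigma(-x)$, integrate by parts) is essentially this same computation, since Stein's lemma is Gaussian integration by parts. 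Your first route, via Blackwell garbling / monotone informativeness, is genuinely different and does work: dividing $\gamma_t-\lambda_\pi$ by $m$ gives a location experiment with means $\pm 1$ and variance $2/m$, so the larger-$m$ experiment maps to the smaller-$m$ one by adding independent Gaussian noise, and the expected Bayes score under any proper scoring rule is weakly higher for the dominating experiment. What that buys is conceptual generality (it covers every proper score, not just log) and it correctly handles your own worry that mean and variance move together; what it costs is that strictness---the substance of the $0<\rho<1$ qualifier---still has to be argued (strict propriety plus non-equivalence of the experiments, or strict convexity of the log-score entropy applied to the induced mean-preserving spread of posteriors), which you flag but leave unfinished, whereas the paper's explicitly positive derivative delivers strictness directly. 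Likewise, your opening conditional-entropy observation ($\mathcal{F}_t^{(0)}\subseteq\mathcal{F}_t^{(1)}$ implies a weakly higher score) gives only the weak inequality and additionally leans on a joint construction of the two filtrations, which the paper's (and your) marginal Gaussian computation avoids.
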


Thus, although the supervisor's true belief at each time-step will simply mimic the expert's reported (true) belief at that time-step, it matters to the supervisor's welfare whether the experts are providing rationales or not. Although experts ordinarily are individually strictly better off not providing rationales, the supervisor can induce all experts to report rationales \emph{by credibly committing to not reporting an updated belief to the principal if an expert does not report their rationale}. In other words, if expert $t$ does not submit a rationale, the supervisor commits to the strategy $s({\gamma}_t^S) = \hat{\gamma}_{t-1}^S$, i.e., $\hat{\gamma}_t^S = \hat{\gamma}_{t-1}^S$, so $U_t = R_t = 0$ when $e_t = 0$. Then, if the principal sets $a$ high enough to cover the expert's cost of effort to submit the rationale, the expected utility of submitting the rationale is positive, and the expected utility of not submitting the rationale is zero. Thus, the role of the supervisor becomes clear: their task is to commit to only accepting reports containing rationales, which will ex-ante improve their expected utility. These insights culminate in our main equilibrium result:

\begin{theorem}[Experts exerting effort and all agents reporting true beliefs is a PBE]\label{thm:main}
Suppose the supervisor commits to the strategy $s^*({\gamma}_t^S) = \hat{\gamma}_{t-1}^S$ if $e_t = 0$ and $s^*({\gamma}_t^S) = \gamma_t^S$ if $e_t = 1$. Let $s_t^*(\gamma_t, \theta_t) = (\gamma_t, \theta_t, 1)$ be the strategy where expert $t$ exerts effort and reports their belief and rationale truthfully. Let $b^*$ be the belief profile where the supervisor believes experts reports to be truthful and thus mimics them, and let $b^*_t$ be the belief profile where experts always believe other experts' reports to be truthful and update their log odds based on the information in filtration $\mathbb{F}^{(1)}$ according to Bayes' rule. Additionally, suppose the principal sets $a_t > \frac{c}{\mathbb{E}[S_{MSR}(Y; \sigma(\gamma_t), \sigma(\gamma_{t-1}))|e_{1:t}=1]}$. Then the strategy profile ${\bf s}^* = (s^*, s_1^*, \ldots, s_t^*)$ and the belief profile ${\bf b}^* = (b^*, b_1^*, \ldots, b_t^*)$ form a truthful and effortful perfect Bayesian equilibrium. The equilibrium is strictly truthful for supervisor's and experts' belief reports, and weakly truthful for experts' rationale reports.
\end{theorem}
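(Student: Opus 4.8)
The goal is to check the two requirements of a PBE for $({\bf s}^*,{\bf b}^*)$ — Bayesian consistency of the belief profile wherever Bayes applies, and sequential rationality of every agent's action at every information set — treating the supervisor's commitment to $s^*$ as a binding device adopted before the experts move. On the equilibrium path every report is truthful and carries a rationale, so the supervisor's Bayes update from $\kappa_1,\dots,\kappa_t$ coincides with expert $t$'s reported belief (this is exactly Theorem~\ref{lem:mimic}), which pins down $b^*$; and each expert's Bayes update from the truthful, rationale-bearing history is the aggregate in filtration $\mathbb{F}^{(1)}$ of Theorem~\ref{thm:aggwith}, which pins down $b_t^*$. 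Off the equilibrium path — some expert misreports a belief, reports an inconsistent rationale, or sets $e_t=0$ — the triggering event has prior probability zero, so Bayes' rule imposes no restriction; we stipulate that all agents continue to treat reports as truthful and that the supervisor, who observes $e_t$ directly, applies $s^*$. Consistency in the $e_t=0$ branch is automatic because the supervisor's prescribed action there, $\hat\gamma_t^S=\hat\gamma_{t-1}^S$, does not depend on their belief about $Y$.

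\textbf{Supervisor.} Under the commitment the supervisor exercises discretion only in the $e_t=1$ branch, where $s^*$ prescribes $\hat\gamma_t^S=\gamma_t^S$. Since the supervisor's utility is an additive sum over $t$ of affine transforms of the strictly proper log score evaluated at $\hat\gamma_t^S$ (the principal's reward \eqref{eq:suprew} together with the per-expert terms), truthful reporting strictly maximizes utility period by period, which is exactly the earlier ``truthful reporting is strictly dominant for the supervisor'' result. The $e_t=0$ branch is off-path, and its prescription is precisely the content of the commitment; this is the commitment the supervisor strictly prefers to make, since by Theorem~\ref{thm:great} the supervisor's ex-ante utility is strictly larger when all experts submit rationales, whereas without the commitment experts free-ride by the ``experts have no incentive to submit rationales'' result and the supervisor is strictly worse off.

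\textbf{Experts.} The key simplification is that expert $t$'s reward $R_t$ depends only on $\hat\gamma_t^S$ and $\hat\gamma_{t-1}^S$, and $\hat\gamma_{t-1}^S$ is already fixed when expert $t$ moves, so expert $t$ faces a static problem with no inter-temporal strategic effect. Conditional on $e_t=1$, the supervisor mimics so $\hat\gamma_t^S=\hat\gamma_t$, the $\hat\gamma_{t-1}^S$ term is a constant, and expert $t$ maximizes $a_t\,\mathbb{E}_{Y\sim\sigma(\gamma_t)}[\log\sigma_Y(\hat\gamma_t)]$ over $\hat\gamma_t$; strict propriety of the log score makes $\hat\gamma_t=\gamma_t$ the unique maximizer, giving strict truthfulness of the belief report. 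Any rationale $\hat\theta_t$ consistent with this truthful belief yields the same payoff, while an inconsistent one is ignored — so it behaves like $e_t=0$ but costs $c$, strictly worse — giving weak truthfulness of the rationale report. Conditional on $e_t=0$, the commitment freezes $\hat\gamma_t^S=\hat\gamma_{t-1}^S$, so $R_t=0$ and $U_t=0$. Comparing, truthful effort yields $a_t\,\mathbb{E}[S_{MSR}(Y;\sigma(\gamma_t),\sigma(\gamma_{t-1}))]-c$, which is strictly positive under the stated lower bound on $a_t$ once one checks that the expected information gain is strictly positive: with rationales $\gamma_t=\gamma_{t-1}+\psi_t$ with $\psi_t$ of strictly positive conditional variance (Observation~\ref{obs:llresi}), so the posterior strictly moves with positive probability and the expected log-score difference is strictly positive. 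Hence $e_t=1$ with a truthful belief report strictly dominates every alternative, and $s_t^*$ is a best response.

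\textbf{Main obstacle.} The conceptual heart — and the step needing the most care — is the supervisor's commitment: ``report the stale belief when no rationale arrives'' is not sequentially rational in the naive sense, because a supervisor who genuinely saw a truthful belief-only report would, by strict propriety, want to incorporate it. The resolution is to model the commitment as a binding Stackelberg-style device adopted before the experts play, and to observe (via Theorem~\ref{thm:great} and the free-riding result) that this is exactly the device the supervisor strictly prefers; with the commitment in force, the $e_t=0$ branch is off-path and needs only belief consistency, which we have. The remaining technical point to nail down is the strict positivity of $\mathbb{E}[S_{MSR}]$ and the precise sense — ex-ante versus signal-conditional — in which the bound on $a_t$ secures the effort incentive; the propriety-based arguments for belief and rationale truthfulness are then routine.
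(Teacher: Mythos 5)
Your proposal is correct and follows essentially the same route as the paper: the commitment is taken as a binding hypothesis, the supervisor's truthfulness comes from strict propriety of the log score, the effort incentive comes from comparing the zero payoff under $e_t=0$ (the supervisor freezes $\hat{\gamma}_t^S=\hat{\gamma}_{t-1}^S$) with the $a_t$-scaled positive expected market-scoring-rule payoff under truthful effort, belief reports are strictly truthful by propriety, and rationale reports are only weakly truthful since any belief-consistent rationale induces the same supervisor update while nonsense reports are ignored. The additional bookkeeping you supply---off-path belief consistency, the strict positivity of $\mathbb{E}[S_{MSR}(Y;\sigma(\gamma_t),\sigma(\gamma_{t-1}))]$ needed for the $a_t$ bound to bite, and the supervisor's ex-ante preference for making the commitment via Theorem \ref{thm:great}---is not spelled out in the paper's proof but matches its surrounding discussion.
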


\begin{proof}
    Since the supervisor commits to ignoring reports without rationales, the expert $t$'s expected utility $U_t=0$ when $e_t=0$. When $e_t=1$ and the expert reports their true belief and rationale, expert $t$'s expected utility $U_t = a \mathbb{E}[R_t(y, \kappa_t)] - c > 0$. Thus, exerting effort to report both belief and rationale is strictly higher expected utility than not exerting effort.
    
    If $e_t=1$ and the expert misreports their belief, the supervisor will believe this report and submit it to the principal; since the expert is paid with a proper scoring rule of the supervisor's report, the expert strictly maximizes their payoff only if $\hat{\gamma}_t^S = \gamma_t$. Thus, the given equilibrium is strictly truthful for experts' belief reports.
    
    If $e_t = 1$ and the expert truthfully reports their belief $\hat{\gamma}_t= \gamma_t$, they still have some freedom in how they their rationale. Specifically, they can report any $\theta_t'$ ($\hat{\theta}_t = \theta_t' \neq \theta_t$) that satisfies $\gamma_t = \phi'_t + \psi'_t$, i.e., the expert can report any rationale that is consistent with their true belief, and thus would induce the same update in the supervisor's belief. As we noted in Section \ref{sec:exp}, experts can technically report rationales inconsistent with their belief, i.e., report $\hat{\theta}_t$ such that the implied log likelihood ratios for old and new information do not match the experts' reported belief ($\hat{\gamma}_t \neq \hat{\phi}_t  + \hat{\psi}_t$). In the case of such \emph{nonsense reports}, we allow the supervisor and other agents ignore the report which again produces zero utility. Thus, the stated equilibrium is weakly truthful in the experts' rationale reports.
\end{proof}

Our equilibrium result above, as stated, is only weakly truthful for experts' rationale reports, since experts can report any rationale that induces their true belief $\gamma_t$ in the supervisor for the same cost $c$. However, we can make this equilibrium strict by additionally rewarding expert $t$ based on an appropriately scaled version of supervisor's \emph{future} log score -- then, only expert $t$'s \emph{true} rationale allows \emph{future} experts to efficiently aggregate information, and any deviation from truthfully revealing their rationale would `leave money on the table.'

\paragraph{Scaling the Payoff}
In Theorem \ref{thm:main}, we stated that the principal must scale the log score by $a_t > \frac{c}{\mathbb{E}[S_{MSR}(Y; \sigma(\gamma_t), \sigma(\gamma_{t-1}))|e_{1:t}=1]}$ to compensate expert $t$ for the cost of effort. Here,
{\small
\begin{equation}
\begin{split}
    \mathbb{E}[S_{MSR}(Y; \sigma(\gamma_t), \sigma(\gamma_{t-1}))|e_{1:t}=1] &= \mathbb{E}_{Y, \gamma_t|\gamma_{t-1}}\left[Y\log\left(\sigma\left(\gamma_t\right)\right) + (1-Y)\log\left(\sigma\left(-\gamma_t\right) \right)\right] \\
    &\phantom{=} - \mathbb{E}_{Y, \gamma_{t-1}}\left[Y\log\left(\sigma\left(\gamma_{t-1}\right)\right) + (1-Y)\log\left(\sigma\left(-\gamma_{t-1}\right)\right) \right] \\
    &= \mathbb{E}_{Y, \gamma_t|\gamma_{t-1}}\left[Y\log\left( \frac{\sigma\left(\gamma_{t-1} + \psi_t\right)}{\sigma\left(\gamma_{t-1}\right)}\right) + (1-Y)\log\left( \frac{\sigma\left(-\gamma_{t-1} - \psi_t\right)}{\sigma\left(-\gamma_{t-1}\right)}\right)\right] \\
    &= \sigma(\lambda_\pi)\mathbb{E}_{\gamma_{t-1}, \psi_t | Y=1}\left[\log\left( \frac{\sigma\left(\gamma_{t-1} + \psi_t\right)}{\sigma\left(\gamma_{t-1}\right)}\right)\right] \\
    &\phantom{=} + \sigma(-\lambda_\pi)\mathbb{E}_{\gamma_{t-1}, \psi_t | Y=0}\left[\log\left( \frac{\sigma\left(-\gamma_{t-1} - \psi_t\right)}{\sigma\left(-\gamma_{t-1}\right)}\right)\right]
    \end{split}
\end{equation}
}

When $e_{t'}=1$ for $1 \leq t' \leq t$ and belief and rationale reports are truthful, by Theorem \ref{thm:aggwith}, we have $\gamma_t|Y \sim \mathcal{N}\left(\lambda_\pi \pm \tau F_{\alpha,t}, 2\tau F_{\alpha,t}\right)$. Unfortunately, the expected log sigmoid of a normally distributed random variable does not have a closed form expression. Nevertheless, we can compute this numerically for any time-step $t$ given parameters $\alpha$ (rate at which new information diminishes), $\rho$ (overlap in experts' information), $\tau$ (absolute amount of information held by each agent), $\lambda_\pi$ (log prior odds). We visualize experts' expected scores in the truthful PBE in Figure \ref{fig:expscore}; inverting this and scaling by the cost $c$ thus determines the principal's parameter setting $a_t$.

\section{Alternative Deliberation Mechanisms} \label{sec:alt}

The mechanism we presented in Section \ref{sec:mechanism} builds naturally on the model in Section \ref{sec:model} and aggregation analysis in Section \ref{sec:agg}. However, the same model and analysis can be used to justify other designs and approaches as well, which we discuss here.

\paragraph{Fusing Supervisor and Principal}

A natural variant of our mechanism merges the roles of supervisor and principal into a single agent. In settings where the principal has enough time or resources to read every rationale directly, this fused role simplifies our mechanism and eliminates the need for an intermediary supervisor. For this to work, the principal must credibly commit to two things: (a) not updating their reported belief in the absence of a rationale (which incentivizes experts to provide a rationale) and (b) always reporting their own updated belief truthfully once a rationale is provided (i.e., there is no need to incentivize the principal-supervisor). When the principal’s time is too valuable to read a large number of rationales, or if there are concerns about whether the principal can credibly commit to ignoring reports without rationales, introducing a dedicated, incentivized supervisor may be beneficial, as we do in our preferred design.

\paragraph{Deliberation Markets}

We can also cast our deliberation mechanism as a `market for rationales' analogous to a prediction market. This follows from the result that reports scored using a market scoring rule can be equivalently cast as a trade against an automated market maker implementing a proper scoring rule \citep{chen2012utility}. A deliberation market could be thought of as a prediction market with exactly two participants: the supervisor and the principal. The initial price is determined by the prior $\sigma(\lambda_\pi)$, and the supervisor trades \emph{on behalf of the expert} against the principal, who operates an automated market maker willing to take either side of the trade. Specifically, at every time-step $t$, the supervisor observes expert $t$'s report, arrives at an updated belief, and executes a trade against the principal in the prediction market reflecting the updated belief \emph{if the expert provided a rationale}. The liquidity at every time-step is determined by the liquidity parameter $a_t$ from experts' rewards. Separately, the principal computes the supervisor's payout by simulating the supervisor's trades against a fresh prediction market with initial price $\sigma(\lambda_\pi)$ at every time-step. Thus, a deliberation market is simply a prediction market mechanism where the supervisor trades against the principal at every time-step based on the rationale supplied by the expert. An interesting extension would be to allow experts to `forward' funds along with their report that the supervisor uses to trade on their behalf; then, the decision on how much liquidity to consume is determined by experts based on their risk preferences. However, while the expert provides the funds, the ultimate decision of what trade to execute is made by the supervisor based on the rationale; importantly, the supervisor makes no trade if expert $t$ submits no rationale. This is interestingly reminiscent of ideas like proxy voting and liquid democracy \citep{miller1969program, kahng2021liquid}.

\paragraph{Self-Resolving Deliberation Mechanism}

While our proposed mechanism relies on the realization of the ground truth to determine payoffs, we can extend our mechanism to elicit beliefs and rationales \emph{even when the ground truth is unavailable}. This variant fuses the principal and supervisor (as discussed above) and adapts the design of \emph{self-resolving prediction markets} \citep{srinivasan2023selfresolving}, which elicits truthful beliefs in the absence of ground truth. In a self-resolving prediction market, agents report beliefs sequentially, the mechanism terminates with constant probability after every report, and agents are scored against the final report at termination. In a self-resolving deliberation mechanism, experts can still observe all previous reports in addition to their own private information, and report their belief and (optionally) their rationale directly to the principal. The principal then records their updated belief after every report, but commits to not recording an updated belief at time-step $t$ if expert $t$ does not submit a rationale. The mechanism terminates with constant (small) probability after the principal records their belief and the principal's belief at termination proxies for the ground truth. Thus, experts are rewarded based on \emph{how much they moved the principal toward their eventual belief}. As in self-resolving prediction markets, if the termination probability is small enough, the principal's final recorded belief will be sufficiently distant in expectation from an expert's belief that they have little hope of reliably misleading the principal. Additionally, as a practical consideration, when experts submit rationales, it should be even more challenging for any misleading rationales to persist, since subsequent agents can provide corrections.

\paragraph{Deliberation Trees}
In large-scale settings where a single supervisor cannot process all rationales, we can adapt our deliberation mechanism into a tree structure. Multiple small groups of experts gather to sequentially submit reports to a local supervisor, and each mini-deliberation mechanism serves as a node that produces a single, `locally aggregated’ belief and rationale. Each local supervisor then submits their aggregate belief and an aggregated rationale onward to a higher-level supervisor who sequentially receives reports from lower-level supervisors from multiple nodes. This higher-level supervisor similarly operates within a local deliberation mechanism, sequentially aggregating intermediate beliefs and passing an intermediate aggregated rationale onward to yet another higher-level supervisor. At each level of the tree, the higher-level supervisor commits to not updating their belief if they observe a report without a rationale, which maintains the incentive for lower-level supervisors (or experts, at the lowest level) to submit aggregated rationales. Ultimately, the root node (the final supervisor) receives a set of local rationales that themselves aggregate all relevant information instead of a larger number of original expert submissions. Since every belief report is ultimately scored with a proper scoring rule, we maintain strict truthfulness for reporting beliefs. This hierarchical approach---forming a `deliberation tree’---keeps the job of reading rationales manageable at every node, and may be important if the supervisor's role incurs a cost, especially if this cost increases non-linearly with the number of rationales they process. 
This variant is useful when it is necessary to relieve the supervisor's burden.

\paragraph{Incorporating LLM Agents} Given that our mechanism is focused on eliciting natural-language rationales, we consider how LLMs can incorporated into our mechanism. The most natural fit is to use \emph{LLMs as supervisors}: the supervisor's primary role is to read experts' rationales and arrive at an updated belief commensurate with the new information revealed in a rationale. The job of analyzing and summarizing textual information plays to the strengths of LLMs, and using LLMs to reason about information and generate forecasts is an active area of research \citep{halawi2024approaching, schoenegger2024wisdom, karger2024forecastbench}. An LLM supervisor could also distill an aggregate rationale from the supplied expert rationales so future experts have a single reference they can observe to  understand what information has already been revealed. Additionally, an LLM supervisor would not have to be incentivized, allowing a more practical construction of self-resolving deliberation mechanisms. As LLM capabilities advance, we may also consider using LLMs as experts where different LLMs could contribute general or domain-specific insights to a centralized supervisor. Our proposed deliberation mechanisms could then be used to compensate such LLM agents for the information they provide based on its value to the supervisor. 

\section{Conclusion and Future Work}

This work makes two contributions: (1) we provide a simple model of how agents with overlapping information can reveal explanations or \emph{rationales} that enable more efficient information aggregation, and (2) we use this model to develop a mechanism for eliciting costly rationales in which reporting true beliefs and rationales is a perfect Bayesian equilibrium. Our proposed model captures the commonsense idea that hearing an explanation for a belief is more valuable than simply hearing the belief, as it allows the listener to discount information they already knew to arrive at a more efficient aggregate than they could by naively aggregating beliefs. Based on this model, we proposed a simple sequential mechanism to truthfully elicit rationales and aggregated beliefs from experts, using a supervisor who reviews experts' reports for new information before accepting the report. Our mechanism has natural applications to judgmental forecasting and crowdsourcing as it provides incentives to obtain rationales in addition to agents' beliefs.  Directions for future work that are of particular interest include exploring whether such mechanisms could: enable human-LLM hybrid forecasting setups; provide rewards/loss signals for AI alignment (particularly the self-resolving variant which does not require ground truth); serve as an crowdsourcing mechanism for eliciting reasoning chains with applications to training LLMs; be used for applications in community-based content moderation;  In particular, we identify experimental validation and LLM-focused applications are natural next steps for future work.

\subsection*{Acknowledgments}
We thank Kevin He for helpful discussions. This work was supported by the OpenAI Superalignment Fast Grant.

\bibliographystyle{apalike}
\bibliography{winnower_template}

\begin{thebibliography}{}

\bibitem[Acemoglu et~al., 2011]{acemoglu2011bayesian}
Acemoglu, D., Dahleh, M.~A., Lobel, I., and Ozdaglar, A. (2011).
\newblock Bayesian learning in social networks.
\newblock {\em The Review of Economic Studies}, 78(4):1201--1236.

\bibitem[Armstrong, 2001]{armstrong2001combining}
Armstrong, J.~S. (2001).
\newblock Combining forecasts.
\newblock {\em Principles of forecasting: A handbook for researchers and practitioners}, pages 417--439.

\bibitem[Aumann, 1976]{aumann1976agreeing}
Aumann, R.~J. (1976).
\newblock Agreeing to disagree.
\newblock {\em The annals of statistics}, 4(6):1236--1239.

\bibitem[Babichenko and Garber, 2021]{babichenko2021learning}
Babichenko, Y. and Garber, D. (2021).
\newblock Learning optimal forecast aggregation in partial evidence environments.
\newblock {\em Mathematics of Operations Research}, 46(2):628--641.

\bibitem[Banerjee, 1992]{banerjee1992simple}
Banerjee, A.~V. (1992).
\newblock A simple model of herd behavior.
\newblock {\em The quarterly journal of economics}, 107(3):797--817.

\bibitem[Bikhchandani et~al., 1998]{bikhchandani1998learning}
Bikhchandani, S., Hirshleifer, D., and Welch, I. (1998).
\newblock Learning from the behavior of others: Conformity, fads, and informational cascades.
\newblock {\em Journal of economic perspectives}, 12(3):151--170.

\bibitem[Bowman et~al., 2022]{bowman2022measuring}
Bowman, S.~R., Hyun, J., Perez, E., Chen, E., Pettit, C., Heiner, S., Luko{\v{s}}i{\=u}t{\.e}, K., Askell, A., Jones, A., Chen, A., et~al. (2022).
\newblock Measuring progress on scalable oversight for large language models.
\newblock {\em arXiv preprint arXiv:2211.03540}.

\bibitem[Chen et~al., 2010]{chen2010gaming}
Chen, Y., Dimitrov, S., Sami, R., Reeves, D.~M., Pennock, D.~M., Hanson, R.~D., Fortnow, L., and Gonen, R. (2010).
\newblock Gaming prediction markets: Equilibrium strategies with a market maker.
\newblock {\em Algorithmica}, 58:930--969.

\bibitem[Chen and Pennock, 2012]{chen2012utility}
Chen, Y. and Pennock, D.~M. (2012).
\newblock A utility framework for bounded-loss market makers.
\newblock {\em arXiv preprint arXiv:1206.5252}.

\bibitem[Dalkey and Helmer, 1963]{dalkey1963experimental}
Dalkey, N. and Helmer, O. (1963).
\newblock An experimental application of the delphi method to the use of experts.
\newblock {\em Management science}, 9(3):458--467.

\bibitem[Dasaratha and He, 2019]{dasaratha2019aggregative}
Dasaratha, K. and He, K. (2019).
\newblock Aggregative efficiency of bayesian learning in networks.
\newblock {\em arXiv preprint arXiv:1911.10116}.

\bibitem[DeGroot, 1974]{degroot1974reaching}
DeGroot, M.~H. (1974).
\newblock Reaching a consensus.
\newblock {\em Journal of the American Statistical association}, 69(345):118--121.

\bibitem[Dietrich and Spiekermann, 2024]{dietrich2024deliberation}
Dietrich, F. and Spiekermann, K. (2024).
\newblock Deliberation and the wisdom of crowds.
\newblock {\em Economic Theory}, pages 1--53.

\bibitem[Ding and Pivato, 2021]{ding2021deliberation}
Ding, H. and Pivato, M. (2021).
\newblock Deliberation and epistemic democracy.
\newblock {\em Journal of Economic Behavior \& Organization}, 185:138--167.

\bibitem[Fama, 1970]{fama1970efficient}
Fama, E.~F. (1970).
\newblock Efficient capital markets.
\newblock {\em Journal of finance}, 25(2):383--417.

\bibitem[Frongillo et~al., 2015]{frongillo2015elicitation}
Frongillo, R., Chen, Y., and Kash, I. (2015).
\newblock Elicitation for aggregation.
\newblock In {\em Proceedings of the AAAI Conference on Artificial Intelligence}, volume~29.

\bibitem[Geanakoplos and Polemarchakis, 1982]{geanakoplos1982we}
Geanakoplos, J.~D. and Polemarchakis, H.~M. (1982).
\newblock We can't disagree forever.
\newblock {\em Journal of Economic theory}, 28(1):192--200.

\bibitem[Gneiting and Raftery, 2007]{gneiting2007strictly}
Gneiting, T. and Raftery, A.~E. (2007).
\newblock Strictly proper scoring rules, prediction, and estimation.
\newblock {\em Journal of the American statistical Association}, 102(477):359--378.

\bibitem[Golub and Sadler, 2017]{golub2017learning}
Golub, B. and Sadler, E. (2017).
\newblock Learning in social networks.
\newblock {\em Available at SSRN 2919146}.

\bibitem[Graeber et~al., 2024]{graeber2024explanations}
Graeber, T., Roth, C., and Schesch, C. (2024).
\newblock Explanations.

\bibitem[Halawi et~al., 2024]{halawi2024approaching}
Halawi, D., Zhang, F., Yueh-Han, C., and Steinhardt, J. (2024).
\newblock Approaching human-level forecasting with language models.
\newblock {\em arXiv preprint arXiv:2402.18563}.

\bibitem[Hanson, 2003]{hanson2003combinatorial}
Hanson, R. (2003).
\newblock Combinatorial information market design.
\newblock {\em Information Systems Frontiers}, 5:107--119.

\bibitem[Helmer, 1967]{helmer1967analysis}
Helmer, O. (1967).
\newblock Analysis of the future: The delphi method.
\newblock Technical report, Rand Corp Santa Monica CA.

\bibitem[Hubinger, 2020]{hubinger2020ai}
Hubinger, E. (2020).
\newblock Ai safety via market making.
\newblock In {\em AI Alignment Forum}.

\bibitem[Hązła et~al., 2021]{hkazla2021bayesian}
Hązła, J., Jadbabaie, A., Mossel, E., and Rahimian, M.~A. (2021).
\newblock Bayesian decision making in groups is hard.
\newblock {\em Operations Research}, 69(2):632--654.

\bibitem[Irving et~al., 2018]{irving2018ai}
Irving, G., Christiano, P., and Amodei, D. (2018).
\newblock Ai safety via debate.
\newblock {\em arXiv preprint arXiv:1805.00899}.

\bibitem[Kahng et~al., 2021]{kahng2021liquid}
Kahng, A., Mackenzie, S., and Procaccia, A. (2021).
\newblock Liquid democracy: An algorithmic perspective.
\newblock {\em Journal of Artificial Intelligence Research}, 70:1223--1252.

\bibitem[Karger et~al., 2024]{karger2024forecastbench}
Karger, E., Bastani, H., Yueh-Han, C., Jacobs, Z., Halawi, D., Zhang, F., and Tetlock, P.~E. (2024).
\newblock Forecastbench: A dynamic benchmark of ai forecasting capabilities.
\newblock {\em arXiv preprint arXiv:2409.19839}.

\bibitem[Khan et~al., 2024]{khan2024debating}
Khan, A., Hughes, J., Valentine, D., Ruis, L., Sachan, K., Radhakrishnan, A., Grefenstette, E., Bowman, S.~R., Rockt{\"a}schel, T., and Perez, E. (2024).
\newblock Debating with more persuasive llms leads to more truthful answers.
\newblock {\em arXiv preprint arXiv:2402.06782}.

\bibitem[Kong and Schoenebeck, 2022]{kong2022false}
Kong, Y. and Schoenebeck, G. (2022).
\newblock False consensus, information theory, and prediction markets.
\newblock {\em arXiv preprint arXiv:2206.02993}.

\bibitem[Lorenz et~al., 2011]{lorenz2011social}
Lorenz, J., Rauhut, H., Schweitzer, F., and Helbing, D. (2011).
\newblock How social influence can undermine the wisdom of crowd effect.
\newblock {\em Proceedings of the national academy of sciences}, 108(22):9020--9025.

\bibitem[Michael et~al., 2023]{michael2023debate}
Michael, J., Mahdi, S., Rein, D., Petty, J., Dirani, J., Padmakumar, V., and Bowman, S.~R. (2023).
\newblock Debate helps supervise unreliable experts.
\newblock {\em arXiv preprint arXiv:2311.08702}.

\bibitem[Miller~III, 1969]{miller1969program}
Miller~III, J.~C. (1969).
\newblock A program for direct and proxy voting in the legislative process.
\newblock {\em Public choice}, 7(1):107--113.

\bibitem[Moshman and Geil, 1998]{moshman1998collaborative}
Moshman, D. and Geil, M. (1998).
\newblock Collaborative reasoning: Evidence for collective rationality.
\newblock {\em Thinking \& Reasoning}, 4(3):231--248.

\bibitem[Navajas et~al., 2018]{navajas2018aggregated}
Navajas, J., Niella, T., Garbulsky, G., Bahrami, B., and Sigman, M. (2018).
\newblock Aggregated knowledge from a small number of debates outperforms the wisdom of large crowds.
\newblock {\em Nature Human Behaviour}, 2(2):126--132.

\bibitem[Neyman and Roughgarden, 2021]{Neyman2021AreYS}
Neyman, E. and Roughgarden, T. (2021).
\newblock Are you smarter than a random expert? the robust aggregation of substitutable signals.
\newblock {\em ArXiv}, abs/2111.03153.

\bibitem[Ostrovsky, 2009]{ostrovsky2009information}
Ostrovsky, M. (2009).
\newblock Information aggregation in dynamic markets with strategic traders.
\newblock In {\em Proceedings of the 10th ACM conference on Electronic commerce}, pages 253--254.

\bibitem[Palley and Soll, 2019]{palley2019extracting}
Palley, A.~B. and Soll, J.~B. (2019).
\newblock Extracting the wisdom of crowds when information is shared.
\newblock {\em Management Science}, 65(5):2291--2309.

\bibitem[Ranjan and Gneiting, 2010]{ranjan2010combining}
Ranjan, R. and Gneiting, T. (2010).
\newblock Combining probability forecasts.
\newblock {\em Journal of the Royal Statistical Society: Series B (Statistical Methodology)}, 72(1):71--91.

\bibitem[Satop{\"a}{\"a} and Ungar, 2015]{satopaa2015combining}
Satop{\"a}{\"a}, V. and Ungar, L. (2015).
\newblock Combining and extremizing real-valued forecasts.
\newblock {\em arXiv preprint arXiv:1506.06405}.

\bibitem[Satop{\"a}{\"a} et~al., 2014]{satopaa2014combining}
Satop{\"a}{\"a}, V.~A., Baron, J., Foster, D.~P., Mellers, B.~A., Tetlock, P.~E., and Ungar, L.~H. (2014).
\newblock Combining multiple probability predictions using a simple logit model.
\newblock {\em International Journal of Forecasting}, 30(2):344--356.

\bibitem[Satop{\"a}{\"a} et~al., 2017]{satopaa2017partial}
Satop{\"a}{\"a}, V.~A., Jensen, S.~T., Pemantle, R., and Ungar, L.~H. (2017).
\newblock Partial information framework: Model-based aggregation of estimates from diverse information sources.

\bibitem[Satop{\"a}{\"a} et~al., 2016]{satopaa2016modeling}
Satop{\"a}{\"a}, V.~A., Pemantle, R., and Ungar, L.~H. (2016).
\newblock Modeling probability forecasts via information diversity.
\newblock {\em Journal of the American Statistical Association}, 111(516):1623--1633.

\bibitem[Schoenegger et~al., 2024]{schoenegger2024wisdom}
Schoenegger, P., Tuminauskaite, I., Park, P.~S., Bastos, R. V.~S., and Tetlock, P.~E. (2024).
\newblock Wisdom of the silicon crowd: Llm ensemble prediction capabilities rival human crowd accuracy.
\newblock {\em Science Advances}, 10(45):eadp1528.

\bibitem[Srinivasan et~al., 2023]{srinivasan2023selfresolving}
Srinivasan, S., Karger, E., and Chen, Y. (2023).
\newblock Self-resolving prediction markets for unverifiable outcomes.
\newblock {\em arXiv preprint arXiv:2306.04305}.

\bibitem[Srinivasan and Morgenstern, 2021]{srinivasan2021auctions}
Srinivasan, S. and Morgenstern, J. (2021).
\newblock Auctions and peer prediction for scientific peer review.
\newblock {\em ArXivorg}.

\bibitem[Tetlock and Gardner, 2016]{tetlock2016superforecasting}
Tetlock, P.~E. and Gardner, D. (2016).
\newblock {\em Superforecasting: The art and science of prediction}.
\newblock Random House.

\bibitem[Wolfers and Zitzewitz, 2004]{wolfers2004prediction}
Wolfers, J. and Zitzewitz, E. (2004).
\newblock Prediction markets.
\newblock {\em Journal of economic perspectives}, 18(2):107--126.

\end{thebibliography}

\appendix
\section{Proofs}

\textbf{Theorem \ref{lem:aggwout}}
    \emph{By Observation \ref{obs:llpriv}, when no expert submits rationales, expert $t$ observes $\Lambda_{t} = \begin{pmatrix} {\lambda}_1, \ldots, {\lambda}_{t} \end{pmatrix}$ a vector of the log likelihood ratios of the private signals observed by experts $1, \ldots, t$ where $\Lambda_{t}|Y \sim \mathcal{N}\left(\pm \tau \mathds{1}_{t}, 2\tau((1-\rho)\mathbb{I}_t + \rho\mathds{1}_t\mathds{1}_t^T) \right)$. Then, expert $t$'s Bayesian posterior log odds is $\gamma_t =\lambda_\pi +  w\sum_{t'=1}^t \lambda_{t'}$ where $w = \frac{1}{1+(t-1)\rho}$. Ex ante, expert $t$'s posterior log odds is distributed as $\gamma_t|Y \sim \mathcal{N}\left(\lambda_\pi \pm tw\tau , 2tw\tau\right)$.}

\begin{proof}
The inverse of the covariance matrix is $\Sigma^{-1} = \frac{1}{2\tau(1 - \rho)} \left( \mathbb{I} - \frac{\rho}{1 - \rho + \rho n} \, \mathds{1}\mathds{1}^T \right)$. Thus, by Lemma \ref{lem:agg}, the expert's posterior log odds is:

{\small
\begin{equation}
    \begin{split}
        \gamma &= \log \frac{P(Y= 1| \Lambda_t)}{P(Y= 0|\Lambda_t)} \\
        &= 2{\mu}^T\Sigma^{-1}{\Lambda_t} + \lambda_\pi\\
        &= 2\left(\tau \mathds{1}^T\right)\left( \frac{1}{2\tau(1 - \rho)} \left( \mathbb{I} - \frac{\rho}{1 - \rho + \rho t} \, \mathds{1}\mathds{1}^T \right)\right) + \lambda_\pi\\
        &= \left( \frac{1}{1 - \rho} \left( \mathds{1}^T - \frac{\rho}{1 - \rho + \rho t} \, \mathds{1}^T\mathds{1}\mathds{1}^T \right)\right) + \lambda_\pi\\
        &= \left( \frac{1}{1 - \rho} \left( 1- \frac{\rho t}{1 - \rho + \rho t}  \right)\mathds{1}^T\right) + \lambda_\pi\\
        &= \frac{1}{1 + (t-1)\rho}  \mathds{1}^T \Lambda_t + \lambda_\pi\\
        &= \lambda_\pi + \frac{1}{1 + (t-1)\rho} \sum_{i=1}^t \lambda_t
    \end{split}
\end{equation}
}
Writing $w = \frac{1}{1 + (t-1)\rho}$, we can compute the distribution of $w\sum_{t'=1}^t \lambda_{t'}$. Since each $\lambda_{t'}$ is normally distributed, the sum is also normally distributed.  The mean is computed as follows:
\begin{equation}
        \mathds{E}\left[w\sum_{t'=1}^t \lambda_{t'}\right] = w\sum_{t'=1}^t \mathds{E}\left[\lambda_{t'}\right] = \pm tw\tau
\end{equation}
The variance is computed as follows:
\begin{equation}
\begin{split}
    \text{Var}\left(w\sum_{t'=1}^t \lambda_{t'}\right) &= \sum_{t'=1}^t w^2 \text{Var}\left(\lambda_{t'}\right) + 2 w^2 \sum_{t'=1}^t\sum_{i=1}^{t'} \text{Cov}\left(\lambda_{t'}, \lambda_{i}\right) \\
    &=   2tw^2\tau + 2 w^2  \frac{t(t-1)}{2} \cdot 2\tau\rho \\
    &=   2tw^2\tau (1 +  (t-1)\rho) \\
    &=   2tw\tau \\
\end{split}
\end{equation}

Thus, ex ante, expert $t$'s posterior log odds formed by aggregating their private information with previous agents' reports is distributed as:
\begin{equation}
    \gamma \sim \mathcal{N}\left( \lambda_\pi  \pm tw\tau , 2tw\tau\right)
\end{equation}

\end{proof}

\noindent\textbf{Theorem \ref{thm:aggwith}}
\emph{By Observation \ref{obs:llresi}, when every expert submits rationales, expert $t$ observes $\Psi_{t} = \begin{pmatrix} {\psi}_1, \ldots, {\psi}_{t} \end{pmatrix}$ a vector of the log likelihood ratios of the residual signals observed by experts $1, \ldots, t$ where}
{\small
\begin{equation}
\Psi_{t}|Y \sim \mathcal{N}\left(\pm \tau \left(f_{\alpha, 1}, \ldots, f_{\alpha, t}\right), 2\tau \cdot \text{diag}\left(f_{\alpha, 1}, \ldots, f_{\alpha, t}\right)\right)
\end{equation}
}
\emph{Then, expert $t$'s Bayesian posterior log odds is $\gamma_t = \lambda_\pi + \sum_{t'=1}^t \psi_{t'}$. Ex ante, expert $t$'s posterior log odds is distributed as $\gamma_t|Y \sim \mathcal{N}\left(\lambda_\pi \pm \tau F_{\alpha,t}, 2\tau F_{\alpha,t}\right)$ where $F_{\alpha, 1}=1$ and for $t\geq 2$:}
{\small
\begin{equation}
F_{\alpha,t} = \sum_{t'=1}^t f_{\alpha,t'} = 1 + (1-\rho)\left(\frac{1-(1-\rho)^{\alpha(t-1)}}{1-(1-\rho)^\alpha}\right)
\end{equation}
}

\begin{proof} The inverse of the covariance matrix is simply $\Sigma^{-1} = \text{diag}\left(\frac{1}{2f_{\alpha, 1}\tau}, \ldots, \frac{1}{2f_{\alpha, t}\tau}\right)$. Thus, expert $t$'s aggregate is simply:

{\small
\begin{equation}
\begin{split}
    \gamma = \log \frac{P(Y= 1| \Psi_t)}{P(Y= 0|\Psi_t)} + \lambda_\pi = 2{\mu}^T\Sigma^{-1}{\Psi_t} + \lambda_\pi  = \lambda_\pi + 
    \sum_{t'=1}^t \psi_{t'}
    \end{split}
\end{equation}
}

Since the $\psi_t$ are normally distributed, the aggregate $\gamma_t$ is also normally distributed. Since the $\psi_t$ are conditionally independent, we can simply add the means and variances to get the distribution of expert $t$'s ex ante aggregate belief:
{\small
\begin{equation}
    \gamma_t | Y \sim \mathcal{N}\left(\lambda_\pi \pm \tau F_{\alpha,t}, 2\tau F_{\alpha,t} \right)
\end{equation}
}
where 
{\small
\begin{equation}
    \begin{split}
        F_{\alpha,t} &= \sum_{t'=1}^t f_{\alpha, t'} \\
        &= 1 + \sum_{t'=2}^t (1-\rho)^{1+(t'-2)\alpha} \\
        &= 1 + (1-\rho)\left(\sum_{t'=0}^{t-2} (1-\rho)^{t'\alpha}\right) \\
        &= 1 + (1-\rho)\left(\frac{1-(1-\rho)^{\alpha(t-1)}}{1-(1-\rho)^\alpha}\right)
    \end{split}
\end{equation}
}
\end{proof}

\noindent\textbf{Theorem  \ref{thm:eff}}
    \emph{Ex-ante, the expected log posterior odds at time-step $t$ is further from log prior odds when all experts provide rationales, relative to when no experts provide rationales, unless $\rho \in \{0, 1\}$. In other words, $tw\tau \leq \tau F_{\alpha, t}$ with equality when $\rho \in \{0, 1\}$.}

\begin{proof}
 We can manipulate $F_{\alpha, t}$ as follows:

    \begin{equation}
    \begin{split}
        F_{\alpha, t} &= 1 + (1-\rho)\left(\sum_{t'=0}^{t-2} (1-\rho)^{t'\alpha}\right) \\
        &\geq 1 + (1-\rho)\left(\sum_{t'=0}^{t-2} (1-\rho)^{t'}\right) \\
        &= \sum_{t'=0}^{t-1} (1-\rho)^{t'} \\
        &= \frac{1}{\rho}(1 - (1-\rho)^t) \\
        &= \frac{1}{\rho\left[ 1+(t-1)\rho\right]}(1 - (1-\rho)^t)\left( 1+(t-1)\rho\right) \\
        &= \frac{1}{\rho\left[ 1+(t-1)\rho\right]}\left( 1 + \rho t - \rho - (1-\rho)^t\left( 1+(t-1)\rho\right) \right) \\
        &= \frac{1}{\rho\left[ 1+(t-1)\rho\right]}\left( (1-\rho)\left[1 - (1-\rho)^{t-1}(1 + (t-1)\rho)\right] +\rho t\right) \\
        &= \frac{1-\rho}{\rho\left( 1+(t-1)\rho\right)} \left[1 - (1-\rho)^{t-1}(1 + (t-1)\rho) \right] + \frac{t}{ 1+(t-1)\rho}\\
        &= \frac{ 1-\rho}{\rho\left( 1+(t-1)\rho\right)}Q  + \frac{t}{ 1+(t-1)\rho}\\
    \end{split}
    \end{equation}

where $Q = \left[1 - (1-\rho)^{t-1}(1 + (t-1)\rho)\right] = 1-Q'$ and $Q' = (1-\rho)^{t-1}(1 + (t-1)\rho)$. Then, to upper bound $Q'$, we log transform it and show that this monotonically decreases from $t=1$:
\begin{equation}
    \log(Q') = \log\left((1-\rho)^{t-1}(1 + (t-1)\rho)\right) = (t-1)\log(1-\rho) + \log(1 + (t-1)\rho)
\end{equation}
At $t=1$, we have that $\log(Q') = 0$. Taking the first derivative, we see:
\begin{equation}
    \begin{split}
        \frac{d}{dt} \log(Q') = \log(1-\rho) + \frac{\rho}{1+\rho(t-1)}
    \end{split}
\end{equation}

Since $\log(1-\rho) \leq -\rho$ and $\frac{\rho}{1+\rho(t-1)} \leq \rho$, $\frac{d}{dt}\log(Q') \leq -\rho + \rho = 0$. Thus, we have $\log(Q') \leq 0 \Rightarrow Q' \leq 1 \Rightarrow Q = 1- Q' \geq 0$. Consequently,

    \begin{equation}
        \begin{split}
    F_{\alpha, t} &\geq \frac{ 1-\rho}{\rho\left( 1+(t-1)\rho\right)}Q  + \frac{t}{ 1+(t-1)\rho}\\
        &\geq \frac{t}{ 1+(t-1)\rho} \\
        &= tw
    \end{split}
    \end{equation}

    We have equality when $\rho = 0$ or $\rho = 1$. Since $\tau > 0$, we have shown that $tw\tau < \tau F_{\alpha, t}$. 
\end{proof}

\noindent\textbf{Theorem \ref{thm:great}}
    \emph{Suppose the supervisor and all experts report their beliefs truthfully. Then, the supervisor's expected utility at every time-step is strictly greater when all experts report rationales, relative to when no experts report rationales, when $0 < \rho < 1$.}

\begin{proof} We saw in Theorem \ref{thm:eff} that at each time-step, ex-ante, the expert's posterior log odds drifted further from the prior log odds in expectation when all experts report rationales, relative to when no experts report rationales. Here, we show that this translates to a higher expected score at each time-step, which in turn translates to higher expected utility at each time-step. The supervisor's expected score at time-step $t$ is:

\begin{equation}
    \begin{split}
        \mathds{E}_Y\left[\mathds{E}_{\gamma_t|Y}\left[S\left(Y; \sigma\left(\gamma_t\right)\right) \right]\right] &= \pi \cdot \mathds{E}_{\gamma_t|Y=1}\left[S\left(Y=1; \sigma\left(\gamma_t\right)\right) \right] + (1-\pi) \cdot \mathds{E}_{\gamma_t|Y=0}\left[S\left(Y=0; \sigma\left(\gamma_t\right)\right) \right]
    \end{split}
\end{equation}

Although the mean log posterior odds with rationales is greater than without rationales (by Theorem \ref{thm:eff}), so is the variance. Thus, we need a little more work to show that this still translates to a higher score.  We begin by showing that when $\gamma_t \sim \mathcal{N}(\lambda_\pi + {\xi, 2\xi})$, the expected log score under $Y=1$ is increasing in $\xi$. We begin by reparameterizing with $\gamma_t = \lambda_\pi +  \xi + \sqrt{2\xi}Z$ where $Z \sim \mathcal{N}(0, 1)$ and writing $f_\xi(Z) = \log\left(\sigma\left(\lambda_\pi + \xi + \sqrt{2\xi}Z\right)\right)$. Then, we want to show that:

\begin{equation}
    \frac{d}{d\xi} \mathbb{E}_Z\left[f_\xi(Z)\right] > 0
\end{equation}

Since $\left|\frac{\partial f}{\partial \xi}\right| = \left|\sigma(-(\lambda_\pi + \xi + \sqrt{2\xi}Z))\left(1 + \frac{Z}{\sqrt{2\xi}}\right)\right| \leq 1 + \left|\frac{Z}{\sqrt{2\xi}}\right|$ which is integrable under the probability measure of $Z$, by the dominated convergence theorem we can move the derivative inside:
{\small
\begin{equation}
\begin{split}
    \frac{d}{d\xi} \mathbb{E}_Z\left[f_\xi(Z)\right] &=  \mathbb{E}_Z\left[\frac{d}{d\xi} f_\xi(Z)\right] \\
    &=  \mathbb{E}_Z\left[f'(\lambda_\pi + \xi + \sqrt{2\xi}Z) \left(1  +\frac{Z}{\sqrt{2\xi}}\right) \right] \\
    &=  \mathbb{E}_Z\left[f'(\lambda_\pi + \xi + \sqrt{2\xi}Z)\right] + \frac{1}{\sqrt{2\xi}}\mathbb{E}_Z\left[f'(\lambda_\pi + \xi + \sqrt{2\xi}Z) Z \right]
    \end{split}
\end{equation}
}

Then, by Stein's lemma, we can write $\mathbb{E}_Z\left[f'(\lambda_\pi + \xi + \sqrt{2\xi}Z) Z \right] = \sqrt{2\xi}\mathbb{E}_Z\left[f''(\lambda_\pi + \xi + \sqrt{2\xi}Z)  \right]$:
{\small
\begin{equation}
    \frac{d}{d\xi} \mathbb{E}_Z\left[f_\xi(Z)\right] =  \mathbb{E}_Z\left[f'(\lambda_\pi + \xi + \sqrt{2\xi}Z) + f''(\lambda_\pi + \xi + \sqrt{2\xi}Z) \right]
\end{equation}
}
Since $f_\xi(Z) = \log\left(\sigma(\lambda_\pi + \xi + \sqrt{2\xi}Z)\right) = -\log\left(1 +\exp{\left(-\lambda_\pi 
-\xi-\sqrt{2\xi}Z\right)}\right)$, for $f(x)$
{\small
\begin{equation}
    \begin{split}
        \mathbb{E}_Z\left[f'(\lambda_\pi + \xi + \sqrt{2\xi}Z) + f''(\lambda_\pi + \xi + \sqrt{2\xi}Z) \right] &= \mathbb{E}_Z\left[\left. \frac{\partial f}{\partial x} \right|_{x = \lambda_\pi + \xi + \sqrt{2\xi}Z} + \left. \frac{\partial^2 f}{\partial (x)^2} \right|_{x = \lambda_\pi + \xi + \sqrt{2\xi}Z}\right] \\
        &= \mathbb{E}_Z\left[\frac{1}{1 + \exp\left(\lambda_\pi + \xi + \sqrt{2\xi}Z\right)} - \frac{\exp(\lambda_\pi + \xi + \sqrt{2\xi}Z)}{\left( 1 + \exp{(\lambda_\pi + \xi + \sqrt{2\xi}Z)}\right)^2}\right] \\
        &=  \mathbb{E}_Z\left[\frac{1}{\left( 1 + \exp{(\lambda_\pi + \xi + \sqrt{2\xi}Z)}\right)^2}\right] \\
        &>0
    \end{split}
\end{equation}
}

for any value of $Z$. Since the gradient of the expectation is strictly positive, the expected log score is increasing in $\xi$. By a symmetric argument, we have that when $\gamma_t \sim \mathcal{N}(\lambda_\pi - {\xi, 2\xi})$, the expected log score under $Y=0$ is increasing in $\xi$ (recall $\xi > 0$). Since the expected log score under $Y=0$ and $Y=1$ is increasing in $\xi$, the unconditional expected log score is also increasing in $\xi$. By Theorem \ref{thm:eff}, $\tau F_{\alpha, t} > tw\tau$ when $0 < \rho < 1$; thus we shown our claim.
\end{proof}

\subsection{The Value of Rationales}\label{app:vis}

\subsubsection{Visualizing Expected Beliefs}\label{app:visbel}

Here, we graph how the aggregated belief evolves in expectation with the number of experts. We take $\lambda_\pi=0$.

\begin{figure}
\begin{center}
    \includegraphics[width=0.9\linewidth]{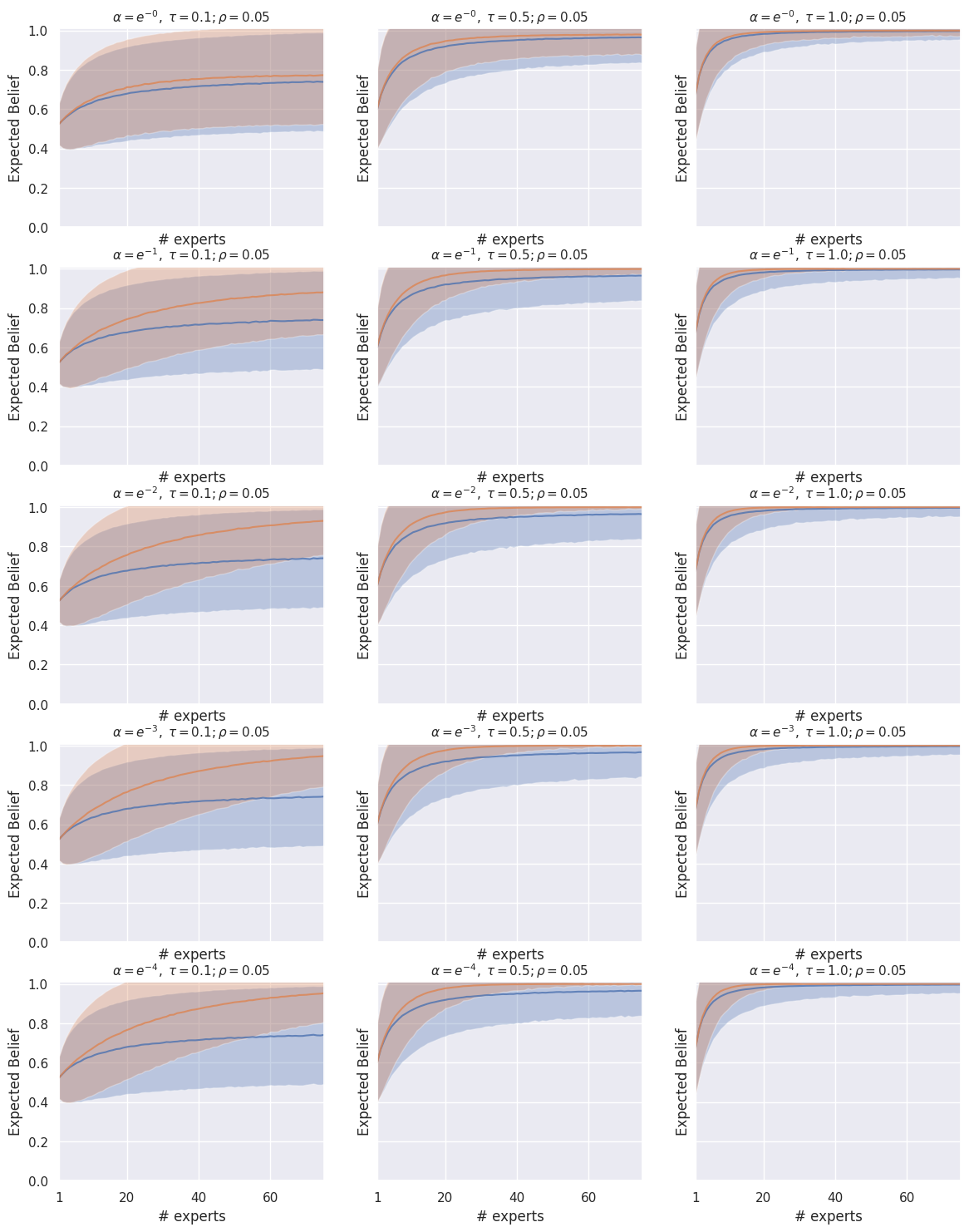}
    \caption{Expected belief (probability) as a function of the number of experts given rationales (orange) and without rationales (blue) when $\rho=0.05$ and $Y=1$, for various values of $\alpha$, $\tau$. Shaded area is 1 std deviation.}
\end{center}
\end{figure}

\begin{figure}
\begin{center}
    \includegraphics[width=0.9\linewidth]{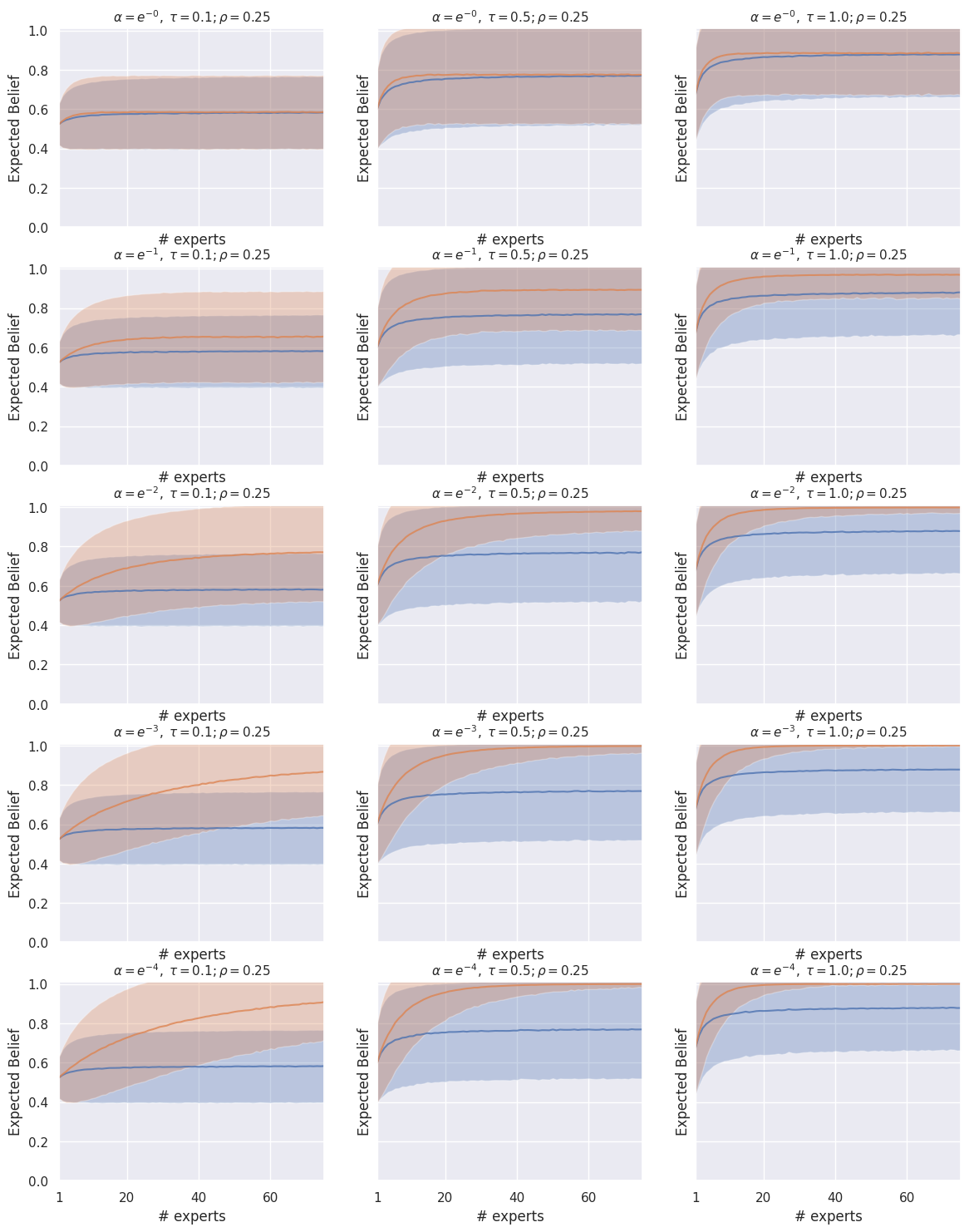}
    \caption{Expected belief (probability) as a function of the number of experts given rationales (orange) and without rationales (blue) when $\rho=0.25$ and $Y=1$, for various values of $\alpha$, $\tau$. Shaded area is 1 std deviation.}
\end{center}
\end{figure}

\begin{figure}
\begin{center}
    \includegraphics[width=0.9\linewidth]{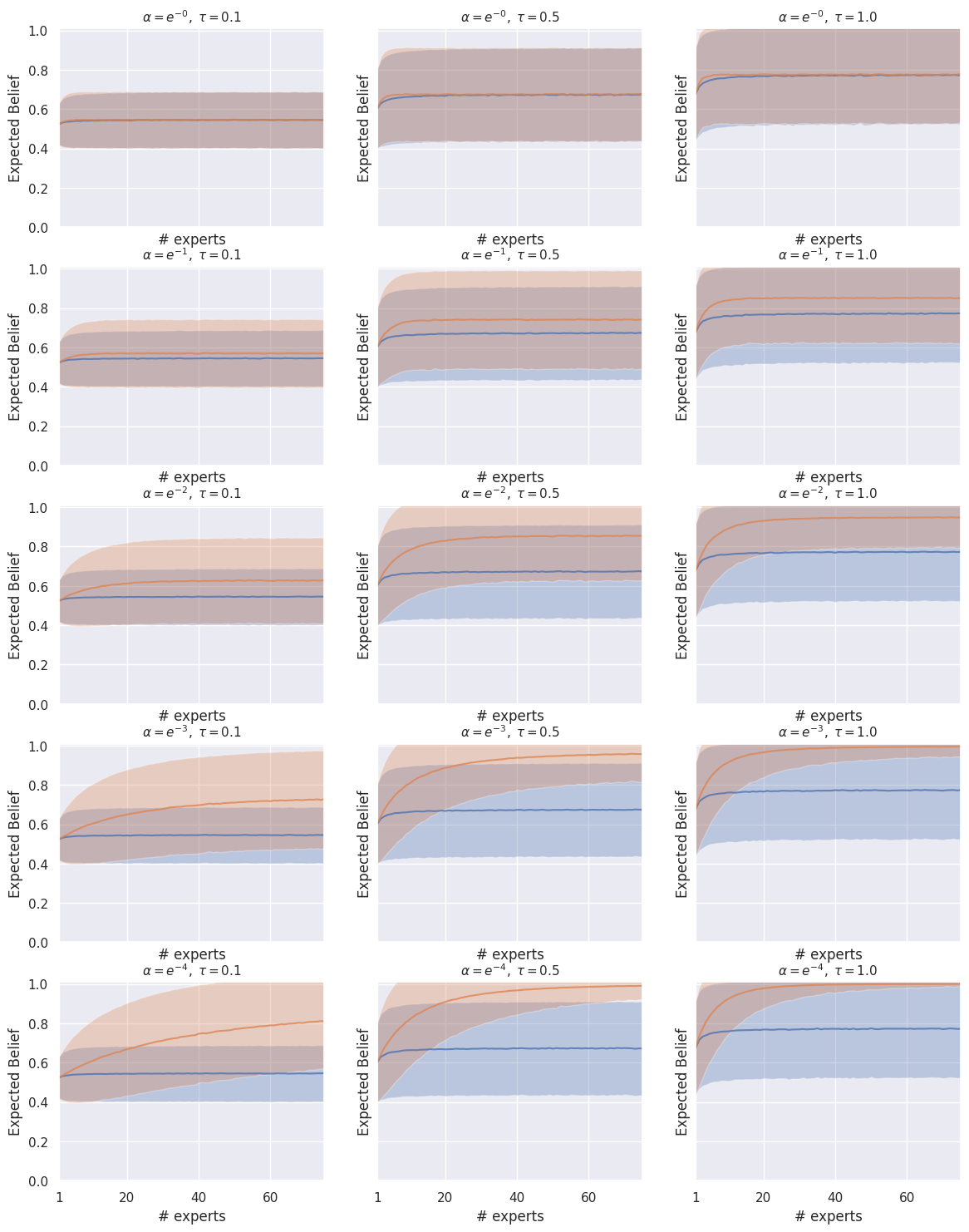}
    \caption{Expected belief (probability) as a function of the number of experts given rationales (orange) and without rationales (blue) when $\rho=0.5$ and  $Y=1$, for various values of $\alpha$, $\tau$. Shaded area is 1 std deviation.}
\end{center}
\end{figure}

\begin{figure}
\begin{center}
    \includegraphics[width=0.9\linewidth]{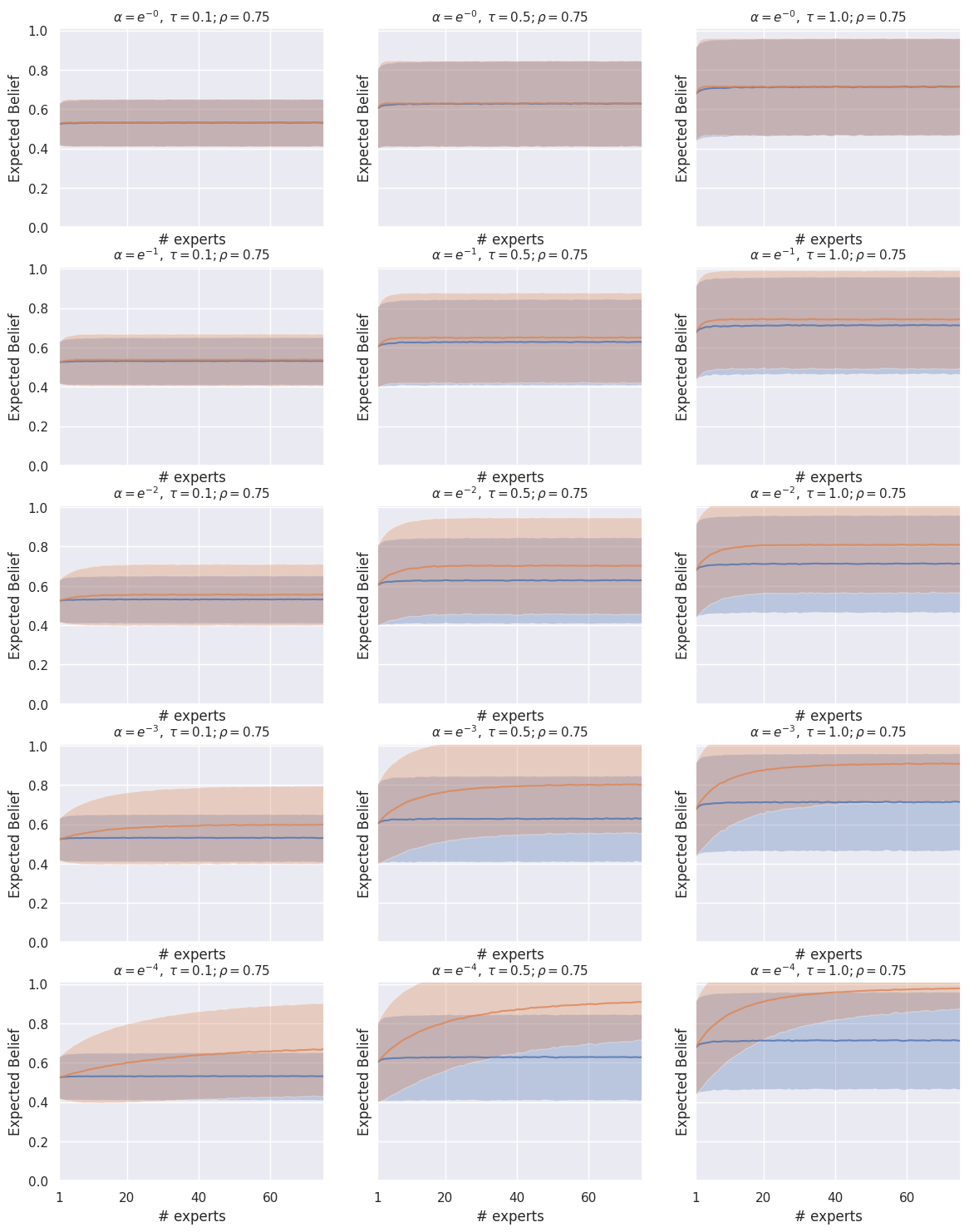}
    \caption{Expected belief (probability) as a function of the number of experts given rationales (orange) and without rationales (blue) when $\rho=0.75$ and $Y=1$, for various values of $\alpha$, $\tau$. Shaded area is 1 std deviation.}
\end{center}
\end{figure}

\begin{figure}[ht]\label{fig:expscore}
  \centering
  \includegraphics[width=0.9\textwidth]{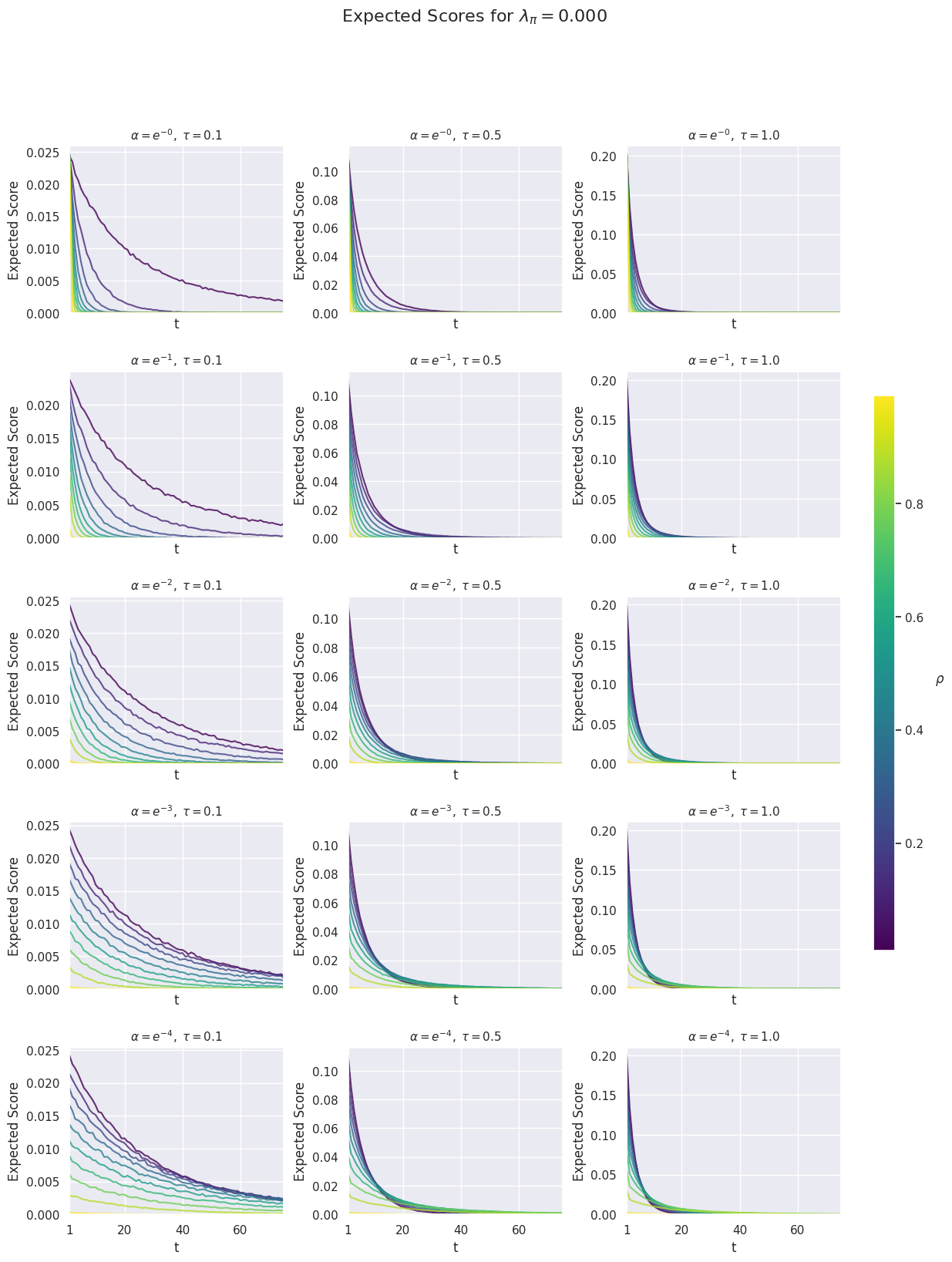}
  \caption{Numerical calculation of the ex-ante expected log score for expert $t$ when experts $1, \ldots, t-1$ report truthfully and submit their rationales, for different values of $\rho$, given $\alpha, \tau$. }
  \label{fig:your-label}
\end{figure}

\end{document}